
\documentclass[12pt]{article}

\usepackage[english]{babel}
\usepackage[letterpaper,top=1in,bottom=1in,left=1in,right=1in]{geometry}
\usepackage{amsmath,amssymb,amsthm}
\usepackage{graphicx}
\usepackage{booktabs}
\usepackage{multirow}
\usepackage{makecell}
\usepackage{setspace}
\usepackage{enumitem}

\linespread{2} 
\setlength{\parskip}{0pt}
\setlength{\parindent}{15pt}
\setlength{\baselineskip}{25pt}

\newtheorem{theorem}{Theorem}
\newtheorem{corollary}{Corollary}[theorem]
\usepackage[numbers]{natbib}
\usepackage[colorlinks=true,allcolors=blue]{hyperref}

\title{An Approximate Maximum Likelihood Estimator for Discretely Observed Linear Birth-and-Death Processes}
\author{
    Xiaochen Long\thanks{\texttt{xiaochen.long@rice.edu}} \\
    Department of Statistics, Rice University
    \and
    Marek Kimmel\thanks{\texttt{kimmel@rice.edu}} \\
    Department of Statistics, Rice University
}
\begin{document}

\maketitle

\begin{abstract}
Linear birth-and-death processes (LBDPs) are foundational stochastic models in population dynamics, evolutionary biology, and hematopoiesis. Estimating parameters from discretely observed data is computationally demanding due to irregular sampling, noise, and missing values. We propose a novel approximate maximum likelihood estimator (MLE) for LBDPs based on a Gaussian approximation to transition probabilities. The approach transforms estimation into a univariate optimization problem, achieving substantial computational gains without sacrificing accuracy.

Through simulations, we show that the approximate MLE outperforms Gaussian and saddlepoint-based estimators in speed and precision under realistic noise and sparsity. Applied to longitudinal clonal hematopoiesis data, the method produces biologically meaningful growth estimates even with noisy, compositional input. Unlike Gaussian and saddlepoint approximations, our estimator is invariant to data scaling, making it ideal for real-world applications such as variant allele frequency analyses.
\end{abstract}

\noindent\textbf{Key Words:} Approximate likelihood; Birth-death process; Clonal hematopoiesis; Gaussian approximation; Longitudinal data; Stochastic modeling.

\newpage

\section{Introduction}

Linear birth-death processes (LBDPs) are a fundamental class of continuous-time Markov processes widely used in population genetics and other domains. Owing to their time-homogeneous and memoryless (Markov) properties, LBDPs are particularly suitable for modeling populations where individuals are considered equivalent and exchangeable. This modeling assumption is common in scenarios such as cell expansion, as demonstrated in several recent studies \cite{watson2020evolutionary,biesiadny2022statistical,fabre2022longitudinal,zarebski2022computationally,louca2020extant,wu2023using}, as well as in epidemiology and related fields.

The challenge of estimating parameters from discretely observed LBDPs was recognized as early as 1951 \cite{immel1951problems}, and remains pertinent in many contemporary applications. Although various estimation approaches have been proposed, robust and efficient estimation remains difficult in practice. Challenges include handling noisy or zero-inflated data due to missing or undetected observations, dealing with datasets that contain limited temporal observations per individual but involve large sample sizes across individuals, and scaling estimation procedures to more complex hierarchical models based on LBDPs. These scenarios demand methods that are both computationally efficient and statistically robust to data irregularities.

In this paper, we propose a fast and accurate estimation method for LBDPs based on the Gaussian approximation. By reducing the estimation task to a univariate optimization problem, our method achieves a balance between accuracy and computational efficiency, outperforming the Gaussian and saddlepoint approximations in runtime while maintaining high estimation accuracy. We also demonstrate that this method extends naturally to time-inhomogeneous birth-death processes, where the birth and death rates, $\lambda(t)$ and $\mu(t)$, vary over time.

We evaluate the performance of four estimation approaches under non-equidistant observation scenarios: the Galton-Watson estimator (as a misspecified baseline model), Gaussian approximation, saddlepoint approximation, and our proposed method. Our simulation results highlight the stability and precision of the proposed estimator, which also surpasses the other methods in computational speed. We further apply the method to a real-world dataset (e.g., the ARIC study) to estimate the growth rates of clonal hematopoiesis of indeterminate potential (CHIP), showcasing its effectiveness in the presence of noise and missing data.

The supplementary material provides theoretical derivations, convergence guarantees for the proposed estimator, an interpretation of the approach as a weighted regression problem, and a simplified form of the LBDP transition probability to facilitate faster computation.

\section{Linear Birth-Death Processes (LBDP)}

\subsection{Definition}

The history of the linear birth-death process can be dated back to \cite{kendall1948generalized}, where analytical results of the time-nonhomogeneous birth-death process were extensively given. Here we use the definition of the time-nonhomogeneous linear birth-death process under Kendall's statement: let the integer-valued time-dependent random variable $X(t)$ measure at time $t$ the size of a population (cell population, particles, etc.) and suppose that in an element of time $dt$ the only possible transitions (and their associated probabilities) are:

\begin{equation}
\begin{aligned}
    X(t+dt) &= X(t) + 1,  & \text{with associated probability } \lambda(t)X(t)\, dt + o(dt);\\
    X(t+dt) &= X(t),      & \text{with associated probability } 1 - \left[\lambda(t) + \mu(t)\right]X(t)\, dt + o(dt);\\
    X(t+dt) &= X(t) - 1,  & \text{with associated probability } \mu(t)X(t)\, dt + o(dt)
\end{aligned}
\end{equation}

Note that in LBDP, the birth rate $\lambda(t)$ and death rate $\mu(t)$ are assumed to be constant, denoted by $\lambda$ and $\mu$.

\subsection{Estimation of LBDPs}

The estimation of linear birth-death processes (LBDPs), particularly for the birth rate $\lambda$ and death rate $\mu$, has been extensively studied in the literature. \cite{keiding1975maximum} and \cite{guttorp1991statistical} provide exact analytical results for maximum likelihood estimation (MLE) under the assumption of equidistant time points and continuous-time observations. \cite{wu2023using} explores the use of Gaussian approximations for estimating LBDPs in population genetics. \cite{davison2021parameter} introduces a saddlepoint approximation approach that allows accurate estimation without relying solely on the exact transition probabilities. Additionally, \cite{tavare2018linear} discusses Bayesian and Approximate Bayesian Computation (ABC) techniques for estimating LBDPs, also covering various frequentist methods applicable to general birth-death processes. Some of these techniques, including those in \cite{crawford2018computational}, are adaptable for use in the context of LBDPs.

Despite these advancements, several challenges remain in applying LBDP models effectively:

\begin{itemize}
    \item \textbf{Zero observations within sequences:} Direct optimization methods for LBDPs cannot accommodate zero counts occurring between nonzero observations, as a zero indicates extinction, violating model assumptions.
    \item \textbf{Numerical instability:} Estimations may become unstable when handling extreme values or noisy data, especially in human experiments where undetected or missing data are often recorded as zeros, and sample variability can be high.
    \item \textbf{Computational inefficiency:} The computational complexity of many LBDP estimation techniques limits their applicability in hierarchical or large-scale models.
\end{itemize}

Consequently, many LBDP estimation methods become infeasible for datasets characterized by noise or limited individual-level observations.






\section{Discretely-Observed LBDP}
\subsection{Galton-Watson Estimator}

As our observation timepoints are discrete and can be non-equidistant, it is also called discretely-observed LBDP. If the time interval between each observation is equal as $\tau$, it is called "the discrete skeleton" as it can form a Galton-Watson process in discrete time. \cite{keiding1975maximum} uses a smart way based on the skeleton of the Galton-Watson process to derive the transition probability $P_{nm}(t) = P\left[X(T+t)=m\vert X(T)=n\right], \ \forall \ T\geq 0$ of this case:

\begin{equation}
\label{eq:Keiding transition}
    P_{nm}(t) = \sum_{j=0}^{min(m,n)}\binom{n}{j}\binom{n+m-j-1}{n-1} A(t)^{n-j}B(t)^{m-j}\left[1-A(t)-B(t)\right]^{j}
\end{equation}

where 

\begin{equation}
    A(t) = \frac{\mu\left[e^{(\lambda-\mu)t}-1\right]}{\lambda e^{(\lambda-\mu)t}-\mu},\quad B(t) = \frac{\lambda}{\mu} A(t)
\end{equation}

During our study, we discovered an alternative expression for the transition probability of the LBDP, which is equivalent to \eqref{eq:Keiding transition}:

\begin{equation}
\label{eq:LBDP transition}
    P_{nm}(t) = \sum_{j=1}^{\min(n,m)} \binom{m-1}{j-1} \binom{n}{j} A(t)^{n-j} B(t)^{m-j} \left[1-A(t)\right]^j \left[1-B(t)\right]^j.
\end{equation}

Alternatively, this transition probability can be expressed using the hypergeometric function ${}_2F_1$ as follows:

\begin{equation}
\label{eq:LBDP transition 2F1}
    P_{nm}(t) = A(t)^{n}B(t)^{m} \binom{n+m-1}{m} {}_2F_1\left(-m, -n; 1-n-m; \frac{1}{A(t)}+\frac{1}{B(t)}-1\right).
\end{equation}

These formulas, \eqref{eq:LBDP transition} and \eqref{eq:LBDP transition 2F1}, are derived by leveraging the property that the sum of independent and identically distributed (iid) geometric random variables follows a negative binomial distribution. Specifically, conditioned on survival, the progeny distribution of the LBDP is a geometric distribution starting at one. By summing up to \( j \), we obtain the probability \( P_{nm}(t) \) for a surviving family of \( j \) ancestors at time \( 0 \). (See Appendix for further details.)

In equidistant timepoints scenario, the estimation based on \eqref{eq:Keiding transition} or \eqref{eq:LBDP transition} are easy as $A(t)$ and $B(t)$ are constants. \cite{keiding1975maximum} gives a simple but direct derivation of the MLE of growth rate $\alpha=\lambda-\mu$ in this case:

\begin{align} \label{eq:Galton-Watson estimator}
\widehat{\alpha} = \frac{1}{\tau} \log \left[\frac{X(\tau) + \dots + X({k\tau})}{X(0) +\dots + X\left[(k-1)\tau\right]}\right]
\end{align}

This estimator is frequently called the "Galton-Watson estimator" as the transition probability is homogeneous, and LBDP can be proved to be identical to the linear fractional GW process at equidistant discrete time points. In this paper, we will refer to this estimator as the Galton-Watson estimator.

A simple MLE in analytical form is desirable, but not universally applicable. In most experiments, "equidistant timepoints" means all patients have to be tested and observed at the same time intervals, which is not very common, especially for human studies (such as the ARIC study). Unfortunately, when the observation times are not equidistant, the MLE does not have an elegant analytical form or can be derived analytically. It has to be computed numerically. Especially when $X(t)$ is large, the transition probability becomes intractable (as calculation includes combination of $X(t)$), we can only approximate the transition probability by Gaussian approximation \citep{wu2023using} or saddplepoint approximation \citep{davison2021parameter}. However, they still have some limitations.

When applying Gaussian approximation or saddlepoint approximation on real-world data, we noticed that the optimizer does not always converge, while the Galton-Watson estimator can give a fast and reasonable estimate. This is because in real-world data, extra noises and observation errors are introduced in the data, so that the method, which is highly dependent on LBDP likelihoods and its functional form, will not be feasible. 

However, the Galton-Watson estimator, although derived based on likelihood maximization, is pretty robust against noises due to its simplicity. Moreover, the Galton-Watson estimator only needs to maximize over one parameter: $\alpha$, while the Gaussian approximation and saddlepoint approximation need to maximize over $\lambda$ and $\mu$ jointly, or equivalently in terms of the Malthusian parameter $\alpha=\lambda-\mu$ and $\beta=\lambda/\mu$, which is harder, especially for the ratio parameter $\beta$. 

\section{An Approximate MLE for LBDP}
\subsection{Derivation}
Our intuition is to start from a Gaussian approximation, and try to approximate the likelihood to obtain an estimate of $\alpha$ based on a univariate optimization like the Galton-Watson Estimator.

In the Gaussian approximation of LBDP, we consider the transition distribution:

\begin{align} \label{eq:Gaussian_approximation_2}
X(t) \vert X(0) \sim N\left( X(0) e^{(\lambda-\mu)t}, \ \frac{\lambda+\mu}{\lambda-\mu}X(0) e^{(\lambda-\mu)t} \left[e^{(\lambda-\mu)t}-1\right] \right),
\end{align}

which leads to the probability density function:

\begin{align} \label{eq:Gaussian Approximation Likelihood}
    f_{X(t) \mid X(0)}(x) = \frac{1}{\sqrt{2\pi \sigma_X^2(t)}} 
    \exp\left( -\frac{(x - \mu_X(t))^2}{2\sigma_X^2(t)} \right),
\end{align}

where the mean and variance are defined as:

\begin{align}
\begin{split}
    \mu_X(t) &= X(0) e^{(\lambda-\mu)t}, \\
    \sigma_X^2(t) &= \frac{\lambda+\mu}{\lambda-\mu} X(0) e^{(\lambda-\mu)t} \left[e^{(\lambda-\mu)t}-1\right].\\
\end{split}
\end{align}

To simplify notation and facilitate further derivations, we define parameters to replace $\lambda,\mu$:

\begin{equation}
    \alpha = \lambda - \mu, \quad \sigma^2 = \frac{\lambda+\mu}{\lambda-\mu}.
\end{equation}

\subsubsection{Case 1: Equidistant Time Intervals}
\label{subsubsec:Galton-Watson}
When \(t_1=t_2= \dots =t_{n-1}\), defining \(t^* = \exp(\alpha t_1)\), the MLE of the Gaussian approximation is identical to the true MLE of LBDP derived by \cite{keiding1975maximum}:

\begin{align}
    &\hat{t^*} = \frac{X_2+X_3+\dots+X_n}{X_1+X_2+\dots+X_{n-1}}, \\
    &\hat{\alpha} = \frac{1}{t_1} \log \frac{X_2+X_3+\dots+X_n}{X_1+X_2+\dots+X_{n-1}}, \quad\quad (\text{Galton-Watson Estimator})\\
    &\hat{\sigma}^2 =  \sum_{i=1}^{n-1} \left\{\sqrt{\frac{X_i\exp(\hat{\alpha} t_i)}{\exp(\hat{\alpha} t_i)-1}} \left[\frac{X_{i+1}}{\exp(\hat{\alpha} t_i)} - X_i \right] \right\}^2 / (n-1).
\end{align}

\subsubsection{Case 2: Non-Equidistant Time Intervals}

For general non-equidistant time points, the transition distribution of \(X_{i+1} \mid X_i\) is:

\begin{align}
X_{i+1} \mid X_i \sim N\left(X_i\exp(\alpha t_i), X_i \exp(\alpha t_i)\left[\exp(\alpha t_i)-1\right] \sigma^2\right).
\end{align}

Thus, the likelihood function of \(X_{i+1} \mid X_i\) is given by:

\begin{align}
L(\alpha, \sigma^2) = \frac{1}{\sigma \sqrt{2\pi X_i\exp(\alpha t_i)\bigl[\exp(\alpha t_i)-1\bigr]}} \exp\left\{-\frac{[X_{i+1} - X_i\exp(\alpha t_i)]^2 }{2X_i\exp(\alpha t_i)\bigl[\exp(\alpha t_i)-1\bigr]\sigma^2}\right\}.
\end{align}

The partial derivative of the log-likelihood function $\frac{\partial l(X_{i+1} \mid X_i)}{\partial \alpha} $ is:

\begin{align}
\frac{\partial l}{\partial \alpha} =  -\frac{t_i}{2} - \frac{t_i\exp(\alpha t_i)}{2[\exp(\alpha t_i)-1]} -\frac{\partial }{\partial \alpha} \frac{[X_{i+1} - X_i\exp(\alpha t_i)]^2 }{2X_i\sigma^2\exp(\alpha t_i)[\exp(\alpha t_i)-1]}.
\end{align}

which can be simplified to 

\begin{align}
\label{eq:likelihood decomposition 1}
\frac{\partial l}{\partial \alpha} &= \frac{t_i\Bigl[X_{i+1} - X_i\exp(\alpha t_i)\Bigr]}{\sigma^2\Bigl[\exp(\alpha t_i)-1\Bigr]}  \\
&\quad + \frac{t_i\Bigl[X_{i+1} - X_i\exp(\alpha t_i)\Bigr]^2\Bigl[2\exp(\alpha t_i)-1\Bigr]}{2X_i\sigma^2\exp(\alpha t_i)\Bigl[\exp(\alpha t_i)-1\Bigr]^2}  \\
&\quad -\frac{t_i}{2} - \frac{t_i\exp(\alpha t_i)}{2[\exp(\alpha t_i)-1]}.
\end{align}

We can approximate the likelihood function of $X_{i+1}\vert X_i$ by only the first term \eqref{eq:likelihood decomposition 1} because it will dominate the partial derivative as $X_i\rightarrow\infty$.

\begin{theorem}
\label{theorem:Dominating term}
Define $l_1, l_2, l_3$ as the three additive terms in \eqref{eq:likelihood decomposition 1} as:

\begin{align}
\begin{split}
\label{Def: three terms}
& l_1 = \frac{t_i\Bigl[X_{i+1} - X_i\exp(\alpha t_i)\Bigr]}{\sigma^2\Bigl[\exp(\alpha t_i)-1\Bigr]} \\
& l_2 = \frac{t_i\Bigl[X_{i+1} - X_i\exp(\alpha t_i)\Bigr]^2\Bigl[2\exp(\alpha t_i)-1\Bigr]}{2X_i\sigma^2\exp(\alpha t_i)\Bigl[\exp(\alpha t_i)-1\Bigr]^2}\\
& l_3 = -\frac{t_i}{2} - \frac{t_i\exp(\alpha t_i)}{2[\exp(\alpha t_i)-1]}\\
\end{split}
\end{align}.

As $X_i\rightarrow\infty$, $l_1$ will dominate \eqref{eq:likelihood decomposition 1} as $\frac{l_2}{l_1}\rightarrow 0$ and $\frac{l_3}{l_1}\rightarrow 0$.

\end{theorem}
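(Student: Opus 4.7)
The plan is to extract the stochastic order of the residual $\Delta_i := X_{i+1} - X_i\exp(\alpha t_i)$ in terms of $X_i$, plug it into each of $l_1, l_2, l_3$, and compare the resulting growth rates in $X_i$. The natural mode of convergence to use here is convergence in probability with respect to the Gaussian transition law \eqref{eq:Gaussian_approximation_2}, since the statement $X_i\to\infty$ must be interpreted against the randomness of $X_{i+1}$ given $X_i$; I will say this explicitly at the start of the proof.

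First I would observe that under \eqref{eq:Gaussian_approximation_2}, $\Delta_i$ is mean zero with variance $X_i\exp(\alpha t_i)\bigl[\exp(\alpha t_i)-1\bigr]\sigma^2$. Since $\alpha$, $\sigma^2$, and $t_i$ are held fixed, the prefactor multiplying $X_i$ is a positive constant (assuming $\alpha\neq 0$; the degenerate case $\alpha=0$ can be handled by an $L'$Hopital-type limit of the same variance formula), so $\Delta_i = O_p(\sqrt{X_i})$, i.e.\ $\Delta_i/\sqrt{X_i}$ converges in distribution to a nondegenerate normal.

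Next I would read off the $X_i$-dependence of each term in \eqref{Def: three terms}. In $l_1$, the prefactor $t_i/\{\sigma^2[\exp(\alpha t_i)-1]\}$ is free of $X_i$, so $l_1 = O_p(\sqrt{X_i})$. In $l_2$, the prefactor of $\Delta_i^2$ carries a $1/X_i$, and $\Delta_i^2 = O_p(X_i)$, giving $l_2 = O_p(1)$. The quantity $l_3$ depends neither on $X_i$ nor on $X_{i+1}$, so $l_3 = O(1)$. Forming the ratios and applying the continuous mapping/Slutsky theorem yields $l_2/l_1 = O_p(1/\sqrt{X_i}) \xrightarrow{p} 0$ and $l_3/l_1 = O(1)/O_p(\sqrt{X_i}) \xrightarrow{p} 0$, which is the claim.

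The main obstacle I anticipate is not the arithmetic but the interpretive one: the ratios $l_2/l_1$ and $l_3/l_1$ are ill-defined on the null event $\{\Delta_i=0\}$, and more seriously, even when nonzero, a ratio of two $O_p$ quantities does not automatically converge in probability to its ratio of orders. To address this, I would state the conclusion as two separate scaling statements, $\sqrt{X_i}\,l_1^{-1}\to C_1\neq 0$ in distribution and $l_2, l_3 = O_p(1)$, from which the ratio statement follows by Slutsky on any sequence avoiding the measure-zero set $\{\Delta_i=0\}$. A worthwhile companion remark, which also justifies the estimator constructed later, is that the same orders hold in $L^2$: $\mathbb{E}[l_1^2] = \Theta(X_i)$ while $\mathbb{E}[l_2^2], \mathbb{E}[l_3^2] = O(1)$, so $l_1$ dominates in mean square as well as in probability.
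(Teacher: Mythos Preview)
Your proposal is correct and follows essentially the same route as the paper: both arguments hinge on the observation that, under the Gaussian transition law, the residual $\Delta_i = X_{i+1} - X_i e^{\alpha t_i}$ has variance proportional to $X_i$, so $l_1 = O_p(\sqrt{X_i})$ while $l_2,l_3 = O_p(1)$. The paper computes the ratios $l_1/l_2$ and $l_1/l_3$ explicitly and identifies them with normal variables whose variances scale as $1/X_i$ and $X_i$ respectively, whereas you phrase the same scaling via $O_p$ notation and Slutsky; the two are equivalent, and your added care about the null set $\{\Delta_i=0\}$ and the $L^2$ remark are refinements the paper does not make.
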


This dominance result is relevant when $X_i$ is large, a common assumption in Gaussian or saddlepoint approximations and often met in biological experiments with large population sizes. Here, $X_i \to \infty$ means the approximation holds for large observed values even with finitely many observations, and is independent of any constraint on $\alpha$.

For computational efficiency, we ignore the second and third terms $l_2 + l_3$, as they are asymptotically negligible when $X_i$ is large by Theorem~\ref{theorem:Dominating term}, leading to the reduced optimization equation:

\begin{align}
\label{eq:partial derivative approximation}
\frac{\partial l}{\partial \alpha} \approx \frac{1}{\sigma^2}\sum_{i=1}^{n-1} \frac{t_i\Bigl[X_{i+1} - X_i\exp(\alpha t_i)\Bigr]}{\exp(\alpha t_i)-1}.
\end{align}

Setting this derivative to zero yields the approximate estimator:

\begin{align}
\label{eq:approximate estimator equation}
h(\alpha) = \sum_{i=1}^{n-1} \frac{t_i}{\exp(\alpha t_i)-1} \Bigl[X_{i+1} - X_i\exp(\alpha t_i)\Bigr] =0.
\end{align}

while the other parameter $\sigma^2=\frac{\lambda+\mu}{\lambda-\mu}$'s estimate can be computed using this formula directly with $\hat{\alpha}$ as the solution of \eqref{eq:approximate estimator equation}:

\begin{equation}
\hat{\sigma}^2 =  \sum_{i=1}^{n-1} \left\{\sqrt{\frac{X_i\exp(\hat{\alpha} t_i)}{\exp(\hat{\alpha} t_i)-1}} \left[\frac{X_{i+1}}{\exp(\hat{\alpha} t_i)} - X_i \right] \right\}^2 / (n-1).
\end{equation}

\subsection{Properties}
\subsubsection{Convergence of the approximate estimator}
Define the true parameter $\alpha$ as $\alpha_0$, we can show that this estimator $\hat{\alpha}\rightarrow\alpha_0$ as $X_i\rightarrow \infty$.

\begin{theorem}[Convergence of the estimator equation]
\label{theorem:convergence of estimator equation}
Solving \eqref{eq:approximate estimator equation} is equivalent to solving
\begin{equation}
\label{eq:consistency of function}
g(\alpha) =\frac{1}{X_1\sum_{i=1}^{n-1} t_i\exp(\alpha_0 T_i)} \left[\sum_{i=1}^{n-1}  \frac{t_i}{\exp(\alpha t_i)-1}(X_{i+1}-X_i) - \sum_{i=1}^{n-1} t_i X_i\right] = 0.
\end{equation}

Suppose $X_{\min} = \min(X_1, \dots, X_n)$. As $X_{\min} \rightarrow \infty$, the function $g(\alpha)$ converges to a limiting function $g^*(\alpha)$ given by:
    \begin{equation}
    \label{eq:limit of g(alpha)}
        \lim_{X_{\min} \to \infty} g(\alpha) = g^*(\alpha) = \frac{1}{X_1 \sum_{i=1}^{n-1} t_i \exp(\alpha_0 T_i)} 
        \sum_{i=1}^{n-1} t_i \exp(\alpha_0 T_{i}) 
        \frac{\exp(\alpha_0 t_i) - \exp(\alpha t_i)}{\exp(\alpha t_i) - 1} ,
    \end{equation}
    where $T_i = \sum_{j=1}^{i-1} t_j$.
\end{theorem}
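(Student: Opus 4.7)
The plan is to split the theorem into two logically distinct claims: an algebraic identity that rewrites $h(\alpha)=0$ in the form claimed for $g(\alpha)=0$, and an asymptotic analysis that identifies the pointwise limit $g^\star(\alpha)$ as $X_{\min}\to\infty$.

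\textbf{Equivalence of the two equations.} I would first manipulate each summand of $h(\alpha)$ by adding and subtracting $X_i$ inside the bracket:
\begin{equation*}
X_{i+1}-X_i\exp(\alpha t_i) \;=\; (X_{i+1}-X_i)\;-\;X_i\bigl[\exp(\alpha t_i)-1\bigr].
\end{equation*}
Dividing by $\exp(\alpha t_i)-1$ and multiplying by $t_i$ splits the $i$-th summand cleanly into $\tfrac{t_i(X_{i+1}-X_i)}{\exp(\alpha t_i)-1}-t_i X_i$. Summing over $i$ yields
\begin{equation*}
h(\alpha)=\sum_{i=1}^{n-1}\frac{t_i(X_{i+1}-X_i)}{\exp(\alpha t_i)-1}-\sum_{i=1}^{n-1}t_iX_i,
\end{equation*}
which is exactly the bracketed quantity appearing in \eqref{eq:consistency of function}. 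The normalizer $X_1\sum_i t_i\exp(\alpha_0 T_i)$ is strictly positive for any nondegenerate trajectory, so the zero sets of $g(\alpha)$ and $h(\alpha)$ coincide. This handles the equivalence claim.

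\textbf{Limit as $X_{\min}\to\infty$.} For the convergence, I would invoke the conditional law of large numbers for the LBDP. From the Gaussian approximation \eqref{eq:Gaussian_approximation_2} (or directly from the first and second moments of the branching transition), $X_{i+1}\mid X_i$ has mean $X_i\exp(\alpha_0 t_i)$ and variance of order $X_i$, so Chebyshev's inequality gives $X_{i+1}/X_i\to\exp(\alpha_0 t_i)$ in probability as $X_i\to\infty$. Iterating, $X_i/X_1\to\exp(\alpha_0 T_i)$ as $X_{\min}\to\infty$. Substituting $X_{i+1}-X_i\approx X_i[\exp(\alpha_0 t_i)-1]$ termwise into the first sum gives
\begin{equation*}
\sum_{i=1}^{n-1}\frac{t_i(X_{i+1}-X_i)}{\exp(\alpha t_i)-1}-\sum_{i=1}^{n-1}t_iX_i \;\approx\; \sum_{i=1}^{n-1}t_iX_i\,\frac{\exp(\alpha_0 t_i)-\exp(\alpha t_i)}{\exp(\alpha t_i)-1},
\end{equation*}
after combining over a common denominator. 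Replacing $X_i$ by its deterministic equivalent $X_1\exp(\alpha_0 T_i)$ and dividing by the normalizer $X_1\sum_i t_i\exp(\alpha_0 T_i)$ reproduces $g^\star(\alpha)$ exactly as stated.

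\textbf{Main obstacle.} The only real technical point is the uniform control of the two sources of error: the one-step error $X_{i+1}-X_i\exp(\alpha_0 t_i)=O_p(\sqrt{X_i})$ and the iterated error $X_i-X_1\exp(\alpha_0 T_i)=O_p(\sqrt{X_1})$. Because $n$ is fixed and finite, this reduces to termwise application of the triangle inequality once one guarantees that each denominator $\exp(\alpha t_i)-1$ is bounded away from zero on the relevant parameter range (and that at least one $t_i>0$, which is automatic). A small amount of extra care is needed if one wants the convergence to hold uniformly in $\alpha$ on compact subsets not containing $0$, which would be the natural hypothesis for subsequently deducing consistency of the zero of $g$ from that of $g^\star$.
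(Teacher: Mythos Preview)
Your proposal is correct and follows essentially the same route as the paper: the add-and-subtract identity $X_{i+1}-X_i\exp(\alpha t_i)=(X_{i+1}-X_i)-X_i[\exp(\alpha t_i)-1]$ for the equivalence, and the law-of-large-numbers fact $X_i/X_1\to\exp(\alpha_0 T_i)$ applied termwise to a fixed finite sum for the limit. The only cosmetic difference is that the paper keeps $g(\alpha)$ split into two sums and invokes a mediant-inequality lemma to pass the limit through the ratio of sums, whereas you combine the two sums first and argue directly via Chebyshev and the triangle inequality; since $n$ is finite both routes are equivalent, and yours is arguably cleaner.
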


Theorem \ref{theorem:convergence of estimator equation} also implies the consistency as the approximate estimator $\hat{\alpha}$:

\begin{corollary}[Convergence of the approximate estimator]

As $X_{\min} \rightarrow \infty$, $\hat{\alpha}\rightarrow\alpha_0$.
\end{corollary}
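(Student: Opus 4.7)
The plan is to derive the corollary directly from Theorem~\ref{theorem:convergence of estimator equation}, using strict monotonicity of the limiting function $g^*$. I would proceed in three steps: (i) verify that $\alpha_0$ is a root of $g^*$, (ii) show it is the unique root, and (iii) invoke the convergence $g \to g^*$ to localize $\hat\alpha$ in any neighborhood of $\alpha_0$.

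For (i), substituting $\alpha = \alpha_0$ into \eqref{eq:limit of g(alpha)} makes each factor $\exp(\alpha_0 t_i) - \exp(\alpha t_i)$ vanish, so $g^*(\alpha_0) = 0$ immediately. For (ii), each summand in \eqref{eq:limit of g(alpha)} carries a strictly positive weight $t_i \exp(\alpha_0 T_i)$ multiplying
\begin{equation*}
\varphi_i(\alpha) := \frac{\exp(\alpha_0 t_i) - \exp(\alpha t_i)}{\exp(\alpha t_i) - 1}.
\end{equation*}
Setting $y = \exp(\alpha t_i)$ and $c_i = \exp(\alpha_0 t_i)$ gives $\varphi_i = (c_i - y)/(y-1)$, whose derivative with respect to $y$ is $-(c_i - 1)/(y-1)^2$, of constant sign determined by whether $c_i > 1$ or $c_i < 1$ (equivalently, by the sign of $\alpha_0$, excluding the degenerate critical case $\alpha_0 = 0$). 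Since $y$ is strictly increasing in $\alpha$, each $\varphi_i$ is strictly monotone in $\alpha$ with $\varphi_i(\alpha_0) = 0$; the weighted sum $g^*$ is therefore strictly monotone and has $\alpha_0$ as its unique zero.

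For (iii), fix any $\epsilon > 0$. Strict monotonicity of $g^*$ gives $g^*(\alpha_0 - \epsilon)$ and $g^*(\alpha_0 + \epsilon)$ opposite signs. By Theorem~\ref{theorem:convergence of estimator equation}, $g(\alpha) \to g^*(\alpha)$ as $X_{\min} \to \infty$ at every fixed $\alpha$, so for $X_{\min}$ large enough the same sign pattern holds for $g$ at $\alpha_0 \pm \epsilon$. Continuity of $g$ in $\alpha$ and the intermediate value theorem then yield a zero $\hat\alpha \in (\alpha_0 - \epsilon, \alpha_0 + \epsilon)$. Since $\epsilon$ was arbitrary, $\hat\alpha \to \alpha_0$.

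The main obstacle I anticipate is the handling of non-uniqueness of $\hat\alpha$ for finite $X_{\min}$: the intermediate value argument only produces some zero in the prescribed window, whereas the corollary implicitly treats $\hat\alpha$ as a well-defined object. The cleanest remedy is to upgrade the pointwise convergence of Theorem~\ref{theorem:convergence of estimator equation} to uniform convergence on a compact neighborhood of $\alpha_0$; together with the strict monotonicity of $g^*$, this forces $g$ to be strictly monotone in that neighborhood for all sufficiently large $X_{\min}$, guaranteeing a unique local root that converges to $\alpha_0$. This uniform convergence should follow without difficulty from the explicit form of $g$ in \eqref{eq:consistency of function}, since the terms depending on $\alpha$ are smooth and the $X_i$-dependent ratios $(X_{i+1} - X_i)/X_i$ concentrate around deterministic limits by the law of large numbers for supercritical branching processes.
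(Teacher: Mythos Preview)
Your argument is correct and follows the same route as the paper: the paper's entire proof of the corollary is the one-line observation that $g^*$ is monotone with unique root $\alpha_0$. Your steps (i)--(iii) flesh this out more carefully than the paper does, and the uniqueness concern you flag in the final paragraph is legitimate and not addressed in the paper's proof of this corollary (though the subsequent Theorem~\ref{theorem:Asymptotic Monotonicity of h(alpha)} on asymptotic monotonicity of $h$ itself would supply the missing ingredient).
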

This is because $g^*(\alpha)=0$ is monotonous and has only one root $\alpha=\alpha_0$.

\subsubsection{Asymptotic Monotonicity of the Estimation Equation}

We now show that $h(\alpha)$ in \eqref{eq:approximate estimator equation} becomes monotonously decreasing in $\alpha$ in the limit, which is beneficial for optimization.

\begin{theorem}[Asymptotic Monotonicity of $h(\alpha)$]
\label{theorem:Asymptotic Monotonicity of h(alpha)}
    The function $h(\alpha)$ is monotonously decreasing in $\alpha$ as $X_{\min} \rightarrow \infty$ when $\alpha>0$.
\end{theorem}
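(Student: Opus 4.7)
The plan is to compute $h'(\alpha)$ in closed form, read off its sign for $\alpha>0$, and then apply the same sample-path asymptotics that power Theorem~\ref{theorem:convergence of estimator equation}.

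First, I differentiate each summand of $h(\alpha)$ directly. Writing $u_i := e^{\alpha t_i}$ so that $du_i/d\alpha = t_i u_i$, the $i$th term $t_i(X_{i+1} - X_i u_i)/(u_i - 1)$ yields, via the quotient rule and a short cancellation,
\begin{equation*}
\frac{d}{d\alpha}\!\left[\frac{t_i\bigl(X_{i+1} - X_i e^{\alpha t_i}\bigr)}{e^{\alpha t_i}-1}\right] = \frac{t_i^2\, e^{\alpha t_i}}{(e^{\alpha t_i}-1)^2}\,(X_i - X_{i+1}),
\end{equation*}
and therefore
\begin{equation*}
h'(\alpha) = \sum_{i=1}^{n-1} \frac{t_i^2\, e^{\alpha t_i}}{(e^{\alpha t_i}-1)^2}\,(X_i - X_{i+1}).
\end{equation*}
For every $\alpha>0$ the prefactor $t_i^2 e^{\alpha t_i}/(e^{\alpha t_i}-1)^2$ is strictly positive, so the sign of $h'(\alpha)$ is determined entirely by the signs of the differences $X_i - X_{i+1}$.

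Second, I invoke the LBDP asymptotics already used in Theorem~\ref{theorem:convergence of estimator equation}: under the supercritical regime in which $X_{\min}\to\infty$ is meaningful (that is, $\alpha_0>0$), the sample path satisfies $X_i/X_1 \to \exp(\alpha_0 T_i)$, and in particular
\begin{equation*}
X_{i+1} - X_i \,\longrightarrow\, X_1\, e^{\alpha_0 T_i}\bigl(e^{\alpha_0 t_i}-1\bigr) \,>\, 0 .
\end{equation*}
Consequently each $X_i - X_{i+1}$ is eventually strictly negative, every summand of $h'(\alpha)$ is strictly negative, and hence $h'(\alpha)<0$ throughout $\alpha>0$, proving the asymptotic monotonic decrease.

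The main (and essentially only) obstacle is conceptual rather than technical: one must carefully separate the role of the argument $\alpha$ (which controls only the positive prefactor) from the role of the true rate $\alpha_0$ (which controls the sign of $X_i - X_{i+1}$ in the limit, via the same mean-trajectory argument as in the proof of Theorem~\ref{theorem:convergence of estimator equation}). Once this distinction is made, no uniformity in $\alpha$ and no small-term estimate is required, because the closed-form expression for $h'(\alpha)$ decides the sign termwise and the result follows immediately.
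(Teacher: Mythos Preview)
Your proposal is correct and follows essentially the same route as the paper: both arguments reduce to the observation that, in the supercritical limit $X_{\min}\to\infty$, every increment satisfies $X_{i+1}>X_i$, which forces each $\alpha$-dependent summand of $h$ to be decreasing. You make this explicit by computing $h'(\alpha)$ in closed form, whereas the paper simply notes (after the decomposition used in \eqref{eq:consistency of function}) that the coefficients $t_i(X_{i+1}-X_i)$ multiplying the monotone factor $1/(e^{\alpha t_i}-1)$ are eventually positive; the content is identical.
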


This is obvious since $P(X_{i+1}> X_i)=0$ as $X_i \rightarrow \infty$ under $\alpha>0$, implying the signs of $\frac{1}{\exp(\alpha t_i)-1}$ in $h(\alpha)$ are positive.

When $X_i$ are finite, $h(\alpha)$ may not be strictly decreasing—especially when the sequence contains both growth and decline (i.e., $X_{i+1}<X_i$ and $X_{j+1}>X_j$), but it remains significantly easier to solve $h(\alpha)=0$ than maximizing the original log-likelihood $l$ in most practical settings.

\subsubsection{Identity to MLE under Equidistant Timepoints}

Same as the Gaussian approximation, for equidistant timepoint data, the approximate MLE is equivalent to the Galton-Watson estimator in Section \ref{subsubsec:Galton-Watson} (which is also the MLE).

\begin{theorem}
\label{theorem:Approximate Estimator Equals MLE}
    When time intervals are equidistant, the approximate estimator is identical to the Galton-Watson estimator (MLE for the LBDP).
\end{theorem}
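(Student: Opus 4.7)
The plan is to specialize the estimating equation \eqref{eq:approximate estimator equation} to the equidistant case and read off the Galton-Watson formula directly. Setting $t_i = t$ for every $i \in \{1,\dots,n-1\}$, the quantity $\exp(\alpha t_i) - 1$ no longer depends on $i$, so the factor $t/[\exp(\alpha t) - 1]$ can be pulled out of the sum in $h(\alpha)$.

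First I would write
\begin{equation*}
h(\alpha) \;=\; \frac{t}{\exp(\alpha t) - 1}\,\sum_{i=1}^{n-1}\Bigl[X_{i+1} - X_i \exp(\alpha t)\Bigr].
\end{equation*}
For any $\alpha \neq 0$ the prefactor $t/[\exp(\alpha t) - 1]$ is nonzero, so the equation $h(\alpha) = 0$ reduces to
\begin{equation*}
\sum_{i=1}^{n-1} X_{i+1} \;=\; \exp(\alpha t)\,\sum_{i=1}^{n-1} X_i .
\end{equation*}
Solving for $\alpha$ by taking logarithms and dividing by $t$ yields exactly the Galton-Watson estimator displayed in \eqref{eq:Galton-Watson estimator} (with $\tau = t$ and $k = n-1$).

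The only step that requires a bit of care is verifying that the factored-out term really cannot vanish at a solution of interest: if $\alpha \to 0$ then $t/[\exp(\alpha t) - 1] \to 1$, so this prefactor never equals zero for finite $\alpha$, and consequently the zero set of $h$ coincides with the zero set of the inner sum. I would also note for completeness that the ratio $\sum_{i=1}^{n-1} X_{i+1} / \sum_{i=1}^{n-1} X_i$ is strictly positive with probability one (since the $X_i$ are positive on non-extinct paths), so the logarithm is well defined and the solution $\hat\alpha$ is unique. There is no serious obstacle here; the theorem is essentially the observation that with constant $t_i$ the weights in the estimating equation collapse to a single constant, reducing the problem to the same linear equation Keiding solved.
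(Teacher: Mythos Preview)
Your argument is correct and is precisely the one-line verification the paper gives: substitute $t_i=\tau$ in \eqref{eq:approximate estimator equation}, factor out the common weight, and read off the Galton--Watson formula. One small slip in your side remark: as $\alpha\to 0$ the prefactor $t/[\exp(\alpha t)-1]$ diverges rather than tending to $1$ (it is $\alpha t/[\exp(\alpha t)-1]$ that has limit $1$), but this does not affect the proof since the prefactor is still nonzero for every finite $\alpha\neq 0$, which is all you need.
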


This can be verified by taking $t_i=\tau$ in the estimator equation \eqref{eq:approximate estimator equation}.

\subsubsection{Robustness to Zero Values}

The approximate estimator $\hat{\alpha}$ is robust to zero values in the sequence $\{X_i\}$, as the equation denominator in \eqref{eq:approximate estimator equation} does not directly involve $X_i$. This property enhances numerical stability in datasets where extinction or dropout events occur.

\subsubsection{Observation Weighting in the Estimation}

Finally, let's revisit the approximate estimator as the solution of equation \eqref{eq:approximate estimator equation}:

\[
\sum_{i=1}^{n-1} \frac{t_i}{\exp(\alpha t_i) - 1} \left[X_{i+1} - X_i \exp(\alpha t_i)\right] = 0
\]

The equation effectively illustrates how different observations are weighted during the estimation process. Specifically, each term in the summation is scaled by a weight of the form \(\frac{t_i}{\exp(\alpha t_i) - 1}\), which depends on both the time interval \(t_i\) and the current value of the parameter \(\alpha\). This weighting scheme naturally adjusts the influence of each observation pair \((X_i, X_{i+1})\), placing greater emphasis on time intervals where exponential growth or decay is more stable and less susceptible to numerical instability. Consequently, the estimator adapts to the temporal spacing of the data, modulating the contribution of each term based on both the timing and the dynamic behavior of the observed process. This results in a more robust and context-aware estimation of \(\alpha\). In essence, the estimation procedure can be further simplified to a form of regression with weighted squared errors:

\begin{equation}
    X(t_i) = X(0)\exp(\alpha t_i) +\sqrt{\frac{t_i}{\exp(\alpha t_i) - 1}} \cdot \epsilon_i
\end{equation}

where each data point is weighted according to \(\frac{t_i}{\exp(\alpha t_i) - 1}\). We would not expand on this but encourage further attempts to formalize the LBDP problem as a weighted least-squares regression.

\subsection{Extension to time-nonhomogeneous birth-death processes}

The approximate estimator can be generalized for a generalized birth-death process in \cite{kendall1948generalized}: 

\begin{equation}
\begin{aligned}
    n_{t+dt} &= n_t + 1,  & \text{with associated probability } \lambda(t)n_t\, dt + o(dt);\\
    n_{t+dt} &= n_t,      & \text{with associated probability } 1 - \{\lambda(t) + \mu(t)\}n_t\, dt + o(dt);\\
    n_{t+dt} &= n_t - 1,  & \text{with associated probability } \mu(t)n_t\, dt + o(dt)
\end{aligned}
\end{equation}

where the birth rate $\lambda(t)$ and death rate $\mu(t)$ can be functions of time $t$ instead of constants. 

As we do not define the functional form of $\lambda(t)$ and $\mu(t)$, we rewrite $\mu_X(t)=\mu_X(t;\theta)$ and $\sigma_X^2(t)=\sigma_X^2(t;\theta)$ where $\theta$ represents all parameters to be estimated in the generalized birth-death process. The Gaussian approximation still holds by the homogeneous nature of the branching processes:

\begin{align} \label{eq:Gaussian_approximation Generalized}
X(t) \vert X(0) \sim N\left( X(0) \mu_{X}(t;\theta), X(0) \sigma_X^2(t;\theta) \right),
\end{align}

where

\begin{align}
\begin{split}
    \label{eq:Gaussian_approximation Generalized moments}
    \mu_{X}(t;\theta) &=  e^{\int_0^t \lambda(s;\theta)-\mu(s;\theta) ds}, \\
    \sigma_X^2(t;\theta) &= \int_0^t (\lambda(u;\theta) + \mu(u;\theta)) \exp\left(-\int_0^u (\lambda(s;\theta) - \mu(s;\theta)) \, ds\right) du.
\end{split}
\end{align}

Suppose we have $n$ observations $X_1,\dots, X_n$ at $n$ timepoints $T_1,\dots,T_n$, forming $n-1$ time intervals $t_1,\dots,t_{n-1}$, we can rewrite the transition density as

\begin{align} \label{eq:transition Generalized}
f_{X_{i+1} \vert X_i}(x)  = \frac{1}{\sqrt{2\pi X_i \sigma_i^2(\theta)}} \exp\left(-\frac{\left[x-X_i \mu_i(\theta)\right]^2}{2 X_i \sigma_i^2(\theta)}\right)
\end{align}

where

\begin{align}
\begin{split}
    \label{eq:transition Generalized moments}
    \mu_i(\theta) &=  e^{\int_{T_i}^{T_{i+1}} \lambda(s;\theta)-\mu(s;\theta) ds}, \\
    \sigma_i^2(\theta) &= \int_{T_i}^{T_{i+1}} \left[\lambda(u;\theta) + \mu(u;\theta)\right] \exp\left(-\int_{T_i}^{u} \left[\lambda(s;\theta) - \mu(s;\theta)\right] \, ds\right) du.
\end{split}
\end{align}

the joint log-likelihood is

\begin{equation}
l(\theta) = \sum_{i=1}^{n-1} \log f_{X_{i+1} \vert X_i}(x_{i+1})
\end{equation}

The generalized approximate estimator $\hat{\theta}$ is the solution of 

\begin{equation}
\label{eq:Generalized Approximate Estimator}
\sum_{i=1}^{n-1} \frac{\frac{\partial \mu_i(\theta)}{\partial \theta}}{\sigma^2_i(\theta)} \left[X_{i+1}-X_i \mu_i(\theta)\right] = 0
\end{equation}

We can show that as $X_{min}\rightarrow\infty$, the true parameter $\theta_0$ is one of the roots of this equation (number of the roots depends on the form of $\lambda(t;\theta)$ and $\mu(t;\theta)$).

\begin{theorem}[Convergence of the generalized estimator equation]
\label{theorem:Convergence of g(theta) generalized} As $X_{\min} \rightarrow \infty$, define function $g(\theta)$ as:
    
    \begin{equation}
    \label{eq:generalized g}
    g(\theta) =\frac{1}{X_1\sum_{i=1}^{n-1}\left[ \frac{\partial \mu_i(\theta_0)}{\partial \theta} \frac{\mu_i(\theta_0)}{\sigma_i^2(\theta_0)}\prod_{j=1}^{i-1}\mu_j(\theta_0)\right]} \left(\sum_{i=1}^{n-1} \frac{\frac{\partial \mu_i(\theta)}{\partial \theta}}{\sigma^2_i(\theta)} \left[X_{i+1}-X_i \mu_i(t_i;\theta)\right]\right).
    \end{equation}
    
    converges to a limiting function $g^*(\theta)$ given by:
    \begin{equation}
    \label{eq:limit of generalized g}
        g^*(\theta)= \frac{1}{\sum_{i=1}^{n-1}\left[\frac{\frac{\partial \mu_i(\theta_0)}{\partial \theta}}{\sigma_i^2(\theta_0)}\cdot \prod_{j=1}^{i}\mu_j(\theta_0)\right]}\sum_{i=1}^{n-1} \frac{\frac{\partial \mu_i(\theta)}{\partial \theta}}{\sigma^2_i(\theta)} \cdot \left[\mu_i(\theta_0)-\mu_i(\theta)\right]\cdot\prod_{j=1}^{i-1}\mu_j(\theta_0) ,
    \end{equation}
    where $\mu_i(\theta_0)$ and $\sigma_i^2(\theta_0)$ are the mean and variance functions at the true parameters $\theta_0$.  As solving \eqref{eq:approximate estimator equation} is also equivalent to solving $g(\theta)$, the estimator is also the solution of $g(\theta)$.
\end{theorem}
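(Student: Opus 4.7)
The plan is to reduce the convergence claim to a law-of-large-numbers style argument by separating the numerator of $g(\theta)$ into a deterministic ``signal'' piece and a vanishing ``noise'' piece. First I would observe that, since $\prod_{j=1}^{i}\mu_j(\theta_0)=\mu_i(\theta_0)\prod_{j=1}^{i-1}\mu_j(\theta_0)$, the denominator of $g(\theta)$ is exactly $X_1$ times the denominator of $g^*(\theta)$; this cancels the random factor $X_1$ and reduces the statement to showing that $(1/X_1)\sum_{i=1}^{n-1} \frac{\partial\mu_i(\theta)/\partial\theta}{\sigma_i^2(\theta)}[X_{i+1}-X_i\mu_i(\theta)]$ converges in probability to $\sum_{i=1}^{n-1}\frac{\partial\mu_i(\theta)/\partial\theta}{\sigma_i^2(\theta)}[\mu_i(\theta_0)-\mu_i(\theta)]\prod_{j=1}^{i-1}\mu_j(\theta_0)$.

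Next, using the Gaussian approximation in \eqref{eq:transition Generalized}, I would decompose
\begin{equation}
X_{i+1}-X_i\mu_i(\theta) \;=\; X_i\bigl[\mu_i(\theta_0)-\mu_i(\theta)\bigr] \;+\; \delta_{i+1},
\end{equation}
where $\delta_{i+1}:=X_{i+1}-X_i\mu_i(\theta_0)$ satisfies $\mathbb{E}[\delta_{i+1}\mid X_i]=0$ and $\operatorname{Var}(\delta_{i+1}\mid X_i)=X_i\sigma_i^2(\theta_0)$. After dividing by $X_1$, the noise contribution of the $i$-th term has conditional variance of order $X_i/X_1^2$, so it vanishes in probability once $X_i/X_1$ is bounded in probability. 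The signal contribution is $(X_i/X_1)[\mu_i(\theta_0)-\mu_i(\theta)]$, which produces the claimed limit as soon as I identify the probabilistic limit of $X_i/X_1$.

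To pin down that limit, I would argue by induction on $i$. The base case is $X_1/X_1=1$, and the recursion $X_{i+1}/X_1 = (X_i/X_1)\mu_i(\theta_0) + \delta_{i+1}/X_1$ together with $\operatorname{Var}(\delta_{i+1}/X_1\mid X_i)=O(X_i/X_1^2)$ gives, via Chebyshev's inequality, $X_{i+1}/X_1 \xrightarrow{p} \prod_{j=1}^{i}\mu_j(\theta_0)$ whenever $X_i/X_1 \xrightarrow{p} \prod_{j=1}^{i-1}\mu_j(\theta_0)$. Feeding this back into the decomposition and applying Slutsky's theorem, together with continuity of $\mu_i(\theta)$, $\sigma_i^2(\theta)$ and $\partial_\theta\mu_i(\theta)$ in $\theta$, yields $g(\theta)\to g^*(\theta)$, which is the claim.

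The main obstacle I anticipate is propagating the Gaussian approximation error through the recursion while keeping the $L^2$ bounds uniform in $i$, since the conditional variance at step $i{+}1$ depends on the random previous state. A clean way to handle this is to bound $\mathbb{E}\bigl[(X_i/X_1 - \prod_{j<i}\mu_j(\theta_0))^2\bigr]$ inductively: the update multiplies the previous mean-square error by $\mu_i(\theta_0)^2$ and adds a term of order $1/X_1$, so all $n-1$ terms are simultaneously controlled and the Chebyshev step becomes routine. Apart from this, the proof needs only mild regularity on the parametrization (differentiability of $\mu_i(\theta)$ and $\sigma_i^2(\theta)$ in $\theta$, and $\sigma_i^2(\theta_0)>0$), which I would record as standing assumptions on $\lambda(t;\theta)$ and $\mu(t;\theta)$.
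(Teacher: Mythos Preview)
Your proposal is correct and follows essentially the same route as the paper: normalize by $X_1$, establish $X_i/X_1 \xrightarrow{p} \prod_{j<i}\mu_j(\theta_0)$ from the Gaussian transition moments, and pass the limit through the finite sum via Slutsky. The only cosmetic difference is that the paper splits each bracket into the two pieces $X_{i+1}/X_1$ and $(X_i/X_1)\mu_i(\theta)$ and invokes a mediant-inequality lemma for the sums, whereas your signal--noise decomposition with the inductive $L^2$ bound on $X_i/X_1-\prod_{j<i}\mu_j(\theta_0)$ makes the same step more explicit and arguably cleaner.
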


It is easy to see that when $\theta=\theta_0$, $g^*(\theta)=0$ as $\mu_i(\theta_0)-\mu_i(\theta)=0$ is a multiplier. So the true parameter $\theta_0$ is one of the roots of the equation above.

\section{Simulation Study: Comparisons with Saddlepoint Approximation and Gaussian Approximation}
\label{sec:simulation}

In order to assess the performance of the proposed approximate MLE for LBDP, we designed an extensive set of simulations. These simulations model the LBDP using both the exact Gillespie algorithm and the tau-leaping approximation. For each setting, we compared the following three estimation techniques: Gaussian approximation, saddlepoint approximation, and the approximate MLE. The non-equidistant time intervals $\{t_i\}$ are simulated using Gamma distributions.

\subsection{Simulation Framework}

For each simulation, we draw $n_t = 10$ observed timepoints from a Gamma distribution (rate = 1) with varying shape parameters to control time spread. We then simulate $n$ independent LBDP trajectories up to the maximum observed time using either the Gillespie or Tau-Leaping algorithm. Observations are obtained by aligning simulated trajectories with the closest preceding timepoints. Each method estimates the growth rate $\alpha = \lambda - \mu$, which we use for comparison. This process is repeated $M = 1000$ times to compute the average Mean Absolute Error (MAE) and runtime.

We considered various combinations of the birth rate $\lambda$, death rate $\mu$, initial population size $X_0$, and number of trajectories $n$, covering both small-scale and large-scale scenarios. Tau-leaping was used in cases where $b$ and $d$ were large to reduce computational overhead.

\subsection{Results Summary}

Table~\ref{tab:mae-results} presents the averaged results over $M = 1000$ simulations for representative parameter configurations, including variations in $(\lambda,\mu)$, $X_0$, and $n$.

\begin{table}[!ht]
    \centering
    \small  
    \caption{Mean Absolute Error (MAE) of $\hat{\alpha}$ for different values of $X_0$ and $n$, based on 1000 simulations.}
    \label{tab:mae-results}
    \renewcommand{\arraystretch}{1.2}
    \begin{tabular}{lcccccc}
        \toprule
        \multicolumn{7}{c}{$\lambda = 0.2$, $\mu = 0.1$, $t_i\sim Gamma(1,1)$ (Gillespie)} \\
        \midrule
        \textbf{Method} 
            & \makecell{$X_0 = 10$\\$n = 1$}
            & \makecell{$X_0 = 10$\\$n = 5$}
            & \makecell{$X_0 = 10$\\$n = 10$}
            & \makecell{$X_0 = 100$\\$n = 1$}
            & \makecell{$X_0 = 100$\\$n = 5$}
            & \makecell{$X_0 = 100$\\$n = 10$} \\
        \midrule
        Gaussian & \textbf{0.0381} & 0.0182 & 0.0120 & \textbf{0.0122} & 0.0056 & 0.0038 \\
        Saddlepoint & 0.0409 & \textbf{0.0176} & \textbf{0.0117} & 0.0122 & 0.0056 & 0.0038 \\
        Approx MLE & 0.0397 & 0.0177 & 0.0117 & 0.0122 & \textbf{0.0056} & \textbf{0.0038} \\
        \midrule
        \multicolumn{7}{c}{$\lambda = 0.6$, $\mu = 0.4$, $t_i\sim$ Gamma(1,1) (Gillespie)} \\
        \midrule
        Gaussian & \textbf{0.0642} & 0.0261 & 0.0179 & 0.0182 & 0.0083 & 0.0056 \\
        Saddlepoint & 0.0688 & 0.0256 & 0.0174 & 0.0184 & 0.0082 & 0.0057 \\
        Approx MLE & 0.0671 & \textbf{0.0253} & \textbf{0.0174} & \textbf{0.0182} & \textbf{0.0082} & \textbf{0.0056} \\
        \midrule
        \multicolumn{7}{c}{$\lambda = 2$, $\mu = 1$, $t_i\sim$ Gamma(0.2,1) (Tau-leaping)} \\
        \midrule
        Gaussian & 0.3436 & 0.1568 & 0.1342 & 0.5123 & 0.5615 & 0.4111 \\
        Saddlepoint & 0.3452 & 0.1683 & 0.1110 & 0.1076 & 0.0478 & 0.0387 \\
        Approx MLE & \textbf{0.3429} & \textbf{0.1436} & \textbf{0.0926} & \textbf{0.1021} & \textbf{0.0454} & \textbf{0.0366} \\
        \midrule
        \multicolumn{7}{c}{$\lambda = 6$, $\mu = 4$, $t_i\sim$ Gamma(0.2,1) (Tau-leaping)} \\
        \midrule
        Gaussian & 0.8960 & 0.6157 & 0.5827 & 1.7509 & 1.4949 & 1.5620 \\
        Saddlepoint & 0.5293 & \textbf{0.2267} & 0.1718 & 0.1684 & 0.0979 & 0.0809 \\
        Approx MLE & \textbf{0.4985} & 0.2275 & \textbf{0.1573} & \textbf{0.1590} & \textbf{0.0940} & \textbf{0.0707} \\
        \bottomrule
    \end{tabular}
\end{table}

\begin{table}[!ht]
    \centering
    \small
    \caption{Average runtime (in seconds) of $\hat{\alpha}$ for different values of $X_0$ and $n$, based on 1000 simulations.}
    \label{tab:runtime-results}
    \renewcommand{\arraystretch}{1.2}
    \begin{tabular}{lcccccc}
        \toprule
        \multicolumn{7}{c}{$\lambda = 0.2$, $\mu = 0.1$, $t_i\sim Gamma(1,1)$ (Gillespie)} \\
        \midrule
        \textbf{Method} 
            & \makecell{$X_0 = 10$\\$n = 1$}
            & \makecell{$X_0 = 10$\\$n = 5$}
            & \makecell{$X_0 = 10$\\$n = 10$}
            & \makecell{$X_0 = 100$\\$n = 1$}
            & \makecell{$X_0 = 100$\\$n = 5$}
            & \makecell{$X_0 = 100$\\$n = 10$} \\
        \midrule
        Gaussian & 0.01266 & 0.04642 & 0.08805 & 0.01137 & 0.04810 & 0.09464 \\
        Saddlepoint & 0.02932 & 0.13009 & 0.25895 & 0.03478 & 0.16414 & 0.33271 \\
        Approx MLE & \textbf{0.00038} & \textbf{0.00076} & \textbf{0.00124} & \textbf{0.00031} & \textbf{0.00072} & \textbf{0.00123} \\
        \midrule
        \multicolumn{7}{c}{$\lambda = 0.6$, $\mu = 0.4$, $t_i\sim$ Gamma(1,1) (Gillespie)} \\
        \midrule
        Gaussian & 0.01321 & 0.05309 & 0.10189 & 0.01282 & 0.05438 & 0.10402 \\
        Saddlepoint & 0.02185 & 0.11033 & 0.21929 & 0.02904 & 0.15040 & 0.29932 \\
        Approx MLE & \textbf{0.00030} & \textbf{0.00071} & \textbf{0.00116} & \textbf{0.00030} & \textbf{0.00072} & \textbf{0.00119} \\
        \midrule
        \multicolumn{7}{c}{$\lambda = 2$, $\mu = 1$, $t_i\sim$ Gamma(0.2,1) (Tau-leaping)} \\
        \midrule
        Gaussian & 0.01779 & 0.08423 & 0.15843 & 0.02368 & 0.11862 & 0.24849 \\
        Saddlepoint & 0.02679 & 0.13290 & 0.24998 & 0.02994 & 0.14670 & 0.31725 \\
        Approx MLE & \textbf{0.00031} & \textbf{0.00071} & \textbf{0.00116} & \textbf{0.00031} & \textbf{0.00073} & \textbf{0.00128} \\
        \midrule
        \multicolumn{7}{c}{$\lambda = 6$, $\mu = 4$, $t_i\sim$ Gamma(0.2,1) (Tau-leaping)} \\
        \midrule
        Gaussian & 0.02224 & 0.10628 & 0.19534 & 0.02658 & 0.12712 & 0.24390 \\
        Saddlepoint & 0.03177 & 0.15338 & 0.29864 & 0.03425 & 0.16749 & 0.34210 \\
        Approx MLE & \textbf{0.00035} & \textbf{0.00073} & \textbf{0.00114} & \textbf{0.00035} & \textbf{0.00072} & \textbf{0.00117} \\
        \bottomrule
    \end{tabular}
\end{table}

The simulation results demonstrate a clear and consistent advantage of the proposed approximate MLE estimator across all experimental configurations. As shown in Table~\ref{tab:mae-results}, the approximate MLE achieves the lowest mean absolute error (MAE) in nearly all settings, with especially pronounced gains under the more challenging scenarios—such as large birth and death rates (e.g., $\lambda = 2$, $\mu = 1$ and $\lambda = 6$, $\mu = 4$) and when using tau-leaping with high variability in observation times. In these cases, both the Gaussian and saddlepoint methods suffer substantial accuracy degradation, while the approximate MLE remains remarkably stable and accurate.

Runtime comparisons in Table~\ref{tab:runtime-results} further reinforce the practical value of the approximate MLE. It is consistently the fastest method, with runtimes several orders of magnitude lower than those of the Gaussian and saddlepoint approaches, regardless of population size or number of trajectories. Notably, even as $X_0$ and $n$ increase, the runtime of the approximate MLE remains effectively constant and negligible, whereas the Gaussian and especially the saddlepoint methods scale poorly in computational cost.

Together, these results highlight the approximate MLE as a highly robust, accurate, and scalable estimator, particularly well-suited for real-time or large-scale inference tasks where both speed and precision are paramount.

\section{Real-World Data Application: Clonal Hematopoiesis}

We applied the approximate MLE method alongside the saddlepoint and Gaussian approximations to the SardiNIA dataset from \cite{fabre2022longitudinal}, a longitudinal study of clonal hematopoiesis of indeterminate potential (CHIP). The goal was to estimate the clonal growth rate $\alpha = \lambda - \mu$, often referred to as the ``fitness effect,'' which quantifies the selective advantage of mutant hematopoietic clones relative to wild-type cells.

The SardiNIA study followed participants from four towns in Sardinia over five phases of data collection spanning more than 20 years. We focused our analysis on four recurrent DNMT3A mutations: R736C, R736H, R882C, and R882H. These were chosen due to prior estimates of their fitness effects reported by \cite{fabre2022longitudinal} and \cite{watson2020evolutionary}, enabling direct methodological comparison.

\begin{figure}[!ht]
\centering
\includegraphics[width=0.9\textwidth]{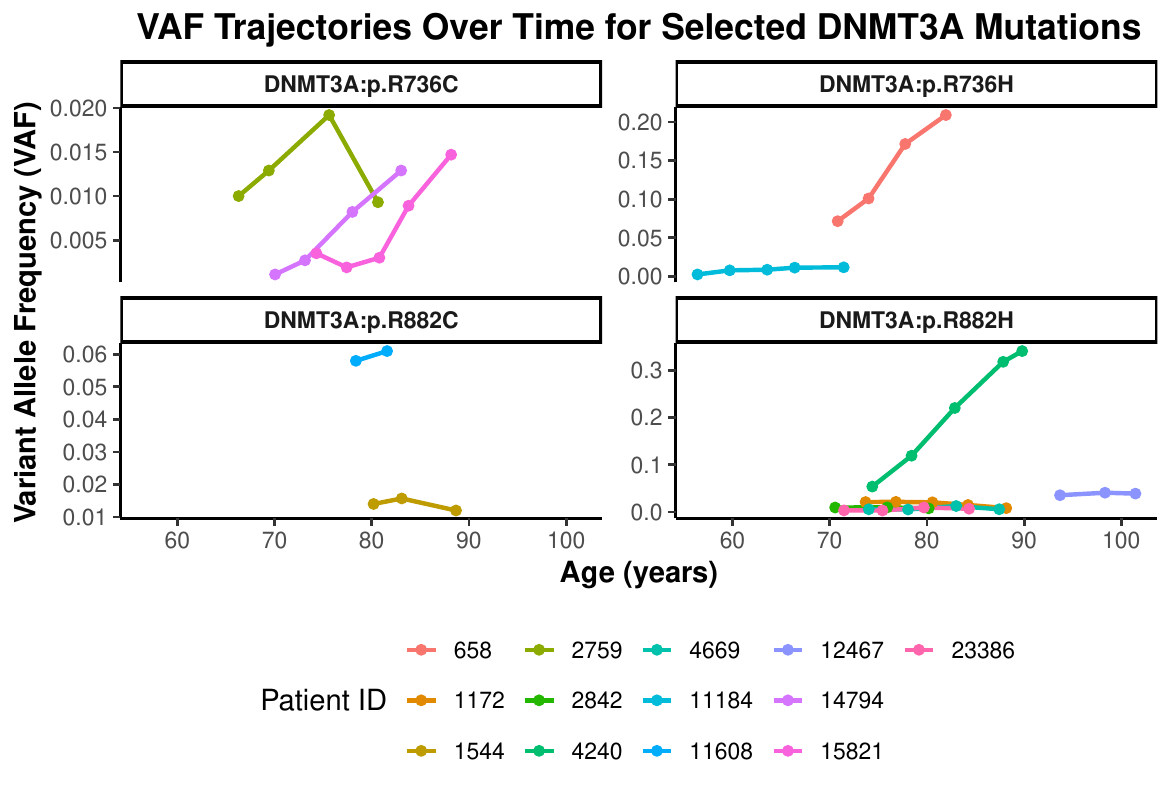}
\caption{VAF trajectories over time for selected \textit{DNMT3A} mutations.}
\label{fig:sard-vaf}
\end{figure}

Since the dataset provides variant allele frequencies (VAFs) rather than absolute counts, we used a transformation to estimate mutant cell counts. Assuming a constant population of 200{,}000 wild-type cells, the VAF is modeled as:
\[
\text{VAF}_{\text{mut}} = \frac{1}{2} \cdot \frac{X}{X + 200{,}000}
\]
Solving for $X$, the estimated count of mutant cells, gives:
\[
X = \frac{\text{VAF}_{\text{mut}}}{1 - \text{VAF}_{\text{mut}}} \times 2 \times 200{,}000
\]

This transformation enables the application of count-based models to VAF data.

$$
\text{VAF of Mutant} = \frac{1}{2}\times \frac{\text{Count of Mutant}}{\text{Count of Mutant} + \text{Count of Wildtype}} =\frac{1}{2}\times\frac{\text{Count of Mutant}}{\text{Count of Mutant}+200000}
$$

Our transformation is the inverse transformation:

$$
\text{Count of Mutant} = \frac{\text{VAF of Mutant}}{1-\text{VAF of Mutant}}\times 200000\times2
$$

\begin{table}[!ht]
    \centering
    \small
    \caption{Estimated growth rates ($\hat{\alpha}$, in \% per year) for selected \textit{DNMT3A} mutations using three methods. 95\% confidence intervals are based on the 2.5\% and 97.5\% quantiles.}
    \label{tab:dnmt3a-alpha}
    \renewcommand{\arraystretch}{1.2}
    \begin{tabular}{llcc}
        \toprule
        \textbf{Mutation} & \textbf{Method} & \textbf{Mean} & \textbf{95\% CI} \\
        \midrule
        DNMT3A R736C & Approx. MLE       & 9.16  & [0.35, 14.4] \\
                     & Gaussian Approx. & 9.40  & [1.03, 14.5] \\
                     & Saddlepoint      & 9.19  & [0.37, 14.4] \\
                     & Fabre (2022)     & 10.3  & [0.7, 14.1] \\
                     & Watson (2020)    & 12.3  & [11.6, 13.4] \\
        \midrule
        DNMT3A R736H & Approx. MLE       & 8.38  & [6.82, 9.94] \\
                     & Gaussian Approx. & 50.7  & [4.21, 97.1] \\
                     & Saddlepoint      & 8.39  & [6.83, 9.95] \\
                     & Fabre (2022)     & 10.3  & [3.2, 17.6] \\
                     & Watson (2020)    & 14.1  & [13.2, 15.4] \\
        \midrule
        DNMT3A R882C & Approx. MLE       & -0.006 & [-1.60, 1.59] \\
                     & Gaussian Approx. & 3.63   & [1.77, 5.50] \\
                     & Saddlepoint      & -0.005 & [-1.60, 1.59] \\
                     & Fabre (2022)     & 1.8    & [-6.8, 11.5] \\
                     & Watson (2020)    & 18.7   & [18.2, 19.4] \\
        \midrule
        DNMT3A R882H & Approx. MLE       & 1.64  & [-4.76, 10.5] \\
                     & Gaussian Approx. & 35.7  & [1.66, 99.7] \\
                     & Saddlepoint      & 1.65  & [-4.76, 10.5] \\
                     & Fabre (2022)     & 5.5   & [0.7, 10.9] \\
                     & Watson (2020)    & 14.8  & [14.1, 15.7] \\
        \bottomrule
    \end{tabular}
\end{table}

Table~\ref{tab:dnmt3a-alpha} presents the estimated growth rates across methods. For most mutations, the approximate MLE and saddlepoint methods yield estimates consistent with those reported by \cite{fabre2022longitudinal}, indicating reliability. However, the Gaussian approximation produced substantially higher and more variable estimates for R736H and R882H, highlighting its sensitivity to noisy or sparse VAF measurements. This sensitivity may stem from the Gaussian model's dependence on the absolute scale of count data, which can be unstable when derived from compositional VAF values.

In contrast, both the approximate MLE and classical Galton–Watson estimators are invariant to the arbitrary scaling factor used to convert VAF to counts. This makes them particularly suitable for compositional datasets like CHIP, where absolute cell counts are unknown. While we did not directly assess the saddlepoint approximation's sensitivity to scaling in this analysis, its formulation also depends on absolute count values, suggesting potential vulnerability to the same issue. Accordingly, the approximate MLE emerges as a stable and computationally efficient method for estimating clonal fitness from real-world longitudinal sequencing data.

\section{Discussion}

We introduced a fast and accurate approximate MLE for LBDP based on the Gaussian approximation to simplify inference to a one-dimensional optimization problem. This leads to substantial gains in runtime while maintaining high accuracy across simulated and real-world scenarios.

Simulations show that our method matches the estimation accuracy of saddlepoint and Gaussian approaches, while being significantly faster, especially under large sample sizes. Unlike the Gaussian approximation, which is highly sensitive to scaling and data sparsity, and the saddlepoint method, which depends on absolute count information, our approach is invariant to scaling transformations. This makes it well-suited for compositional data where true counts are unknown, such as variant allele frequencies (VAFs).

By applying the method to the SardiNIA CHIP dataset \citep{fabre2022longitudinal}, we estimated the clonal growth rate $\alpha = \lambda - \mu$ for key DNMT3A mutations. The approximate MLE and saddlepoint estimates aligned well with prior studies, while the Gaussian approximation produced unstable results for mutations like R736H and R882H, highlighting its sensitivity in noisy, compositional datasets.

The approximate MLE offers a fast, stable, and interpretable approach for estimating LBDP parameters from longitudinal data. It performs well under realistic conditions, including non-equidistant observations and compositional data with observational noise, and is particularly useful for modeling clonal dynamics. While not explored empirically here, the method also extends naturally to time-nonhomogeneous birth–death models, suggesting further potential applications.

\bibliographystyle{plain}
\bibliography{sample}

\newpage

\section{Appendix}

\subsection{Proof for Theorem \ref{theorem:Dominating term}}
In Theorem \ref{theorem:Dominating term}, we have three terms:
\label{subsec:dominating term}

\begin{align}
\label{eq:likelihood decomposition 2}
\frac{\partial l}{\partial \alpha} &= \frac{t_i\Bigl[X_{i+1} - X_i\exp(\alpha t_i)\Bigr]}{\sigma^2\Bigl[\exp(\alpha t_i)-1\Bigr]}  \\
&\quad + \frac{t_i\Bigl[X_{i+1} - X_i\exp(\alpha t_i)\Bigr]^2\Bigl[2\exp(\alpha t_i)-1\Bigr]}{2X_i\sigma^2\exp(\alpha t_i)\Bigl[\exp(\alpha t_i)-1\Bigr]^2}  \\
&\quad -\frac{t_i}{2} - \frac{t_i\exp(\alpha t_i)}{2[\exp(\alpha t_i)-1]}.
\end{align}

Let $\frac{\partial l}{\partial \alpha }=l'$, the first term be $l_1$, the second term be $l_2$ and the third term be $l_3$, we have

\begin{align}
\begin{split}
& l_1 = \frac{t_i\Bigl[X_{i+1} - X_i\exp(\alpha t_i)\Bigr]}{\sigma^2\Bigl[\exp(\alpha t_i)-1\Bigr]} \\
& l_2 = \frac{t_i\Bigl[X_{i+1} - X_i\exp(\alpha t_i)\Bigr]^2\Bigl[2\exp(\alpha t_i)-1\Bigr]}{2X_i\sigma^2\exp(\alpha t_i)\Bigl[\exp(\alpha t_i)-1\Bigr]^2}\\
& l_3 = -\frac{t_i}{2} - \frac{t_i\exp(\alpha t_i)}{2[\exp(\alpha t_i)-1]}\\
\end{split}
\end{align}

where $\frac{\partial l}{\partial \alpha}=l' = l_1 + l_2 + l_3$.

Our target is to prove that $l_1$ is the dominating term by proving 

\begin{align}
&\frac{l_2}{l_1}\xrightarrow{p}0 \\
&\frac{l_3}{l_1}\xrightarrow{p}0 
\end{align}

as $X_i\rightarrow\infty$.

For $\frac{l1}{l2}$ we have

\begin{align}
\frac{l_1}{l_2} &= \frac{X_i\exp(\alpha t_i)}{X_{i+1} - X_i\exp(\alpha t_i)}\cdot \frac{2\exp(\alpha t_i)-2}{2\exp(\alpha t_i)-1} \\
& = \frac{X_i\exp(\alpha t_i)}{X_{i+1} - X_i\exp(\alpha t_i)}\cdot (\frac{1}{1+\frac{1}{2\exp(\alpha t_i)-2}})\\
& = \frac{1}{Z_i}
\end{align}

where 

\begin{equation}
\label{eq:Zi}
    Z_i\sim N(0,\frac{\sigma^2[2\exp(\alpha t_i)-1]^2}{2X_i \exp(\alpha t_i)(2\exp(\alpha t_i)-2)}
\end{equation}

By $Var(Z_i)\in \mathcal{O}(\frac{1}{X_i})$, we know that as $X_i\rightarrow\infty$, $\vert Z_i \vert<  X_i^{-\frac{1}{2}+\epsilon}$, $\forall \epsilon>0$ with probability 1, which means that 

\begin{align}
P(\vert \frac{l_1}{l_2}\vert > X_i^{\frac{1}{2}-\epsilon} )\rightarrow1 \quad \text{as } X_i \rightarrow \infty
\end{align}

for all $\epsilon>0$, or equivalently (by letting $\epsilon<\frac{1}{2}$, as $X_i^{\frac{1}{2}-\epsilon}\rightarrow \infty$),

\begin{equation}
\label{eq:l2/l1}
    \frac{l_2}{l_1}\xrightarrow{p} 0 \quad \text{as } X_i \rightarrow \infty
\end{equation}

For $\frac{l_1}{l_3}$, we have

\begin{equation}
\frac{l_1}{l_3} \propto \frac{\Bigl[X_{i+1} - X_i\exp(\alpha t_i)\Bigr]}{\sigma^2\Bigl[2\exp(\alpha t_i)-1\Bigr]} = \tilde{Z_i}
\end{equation}

where

\begin{equation}
\label{eq:tilde{Zi}}
\tilde{Z_i}\sim N(0,X_i\frac{\exp(\alpha t_i)\left[\exp(\alpha t_i)-1\right] }{\sigma^2\Bigl[2\exp(\alpha t_i)-1\Bigr]^2})
\end{equation}

Similar to $\frac{l_1}{l_2}$, as $X_i\rightarrow\infty$, $\vert \tilde{Z_i} \vert<  X_i^{\frac{1}{2}-\epsilon}$, $\forall \epsilon>0$, with probability 1, which means that

\begin{align}
P(\vert \frac{l_1}{l_3}\vert > X_i^{\frac{1}{2}-\epsilon} )\rightarrow1 \quad \text{as } X_i \rightarrow \infty
\end{align}

or equivalently (by letting $\epsilon<\frac{1}{2}$, as $X_i^{\frac{1}{2}-\epsilon}\rightarrow \infty$),

\begin{equation}
\label{eq:l3/l1}
    \frac{l_3}{l_1}\xrightarrow{p} 0 \quad \text{as } X_i \rightarrow \infty
\end{equation}

By \ref{eq:l2/l1} and \ref{eq:l3/l1}, as $l'=l_1+l_2+l_3$, we have

\begin{align}
\label{eq:l'/l1}
P(\vert \frac{l'-l_1}{l_1}\vert<X_i^{-\frac{1}{2}+\epsilon} )\rightarrow1, \ \forall\epsilon>0\quad \text{as } X_i \rightarrow \infty
\end{align}

which means (by letting $\epsilon<\frac{1}{2}$, as $X_i^{\frac{1}{2}-\epsilon}\rightarrow \infty$),

\begin{equation}
    \frac{l'}{l_1}\xrightarrow{p} 1,  \quad \text{as } X_i \rightarrow \infty
\end{equation}

Finally, we show that $l_1$ is the dominating term in $\frac{\partial l}{\partial \alpha}$ with  $\frac{\frac{\partial l}{\partial \alpha}}{l_1}\xrightarrow{p} 1 $ as $X_i \rightarrow \infty$.

\subsection{Proof for Theorem \ref{theorem:convergence of estimator equation}}

\begin{align}
&g(\alpha) =\frac{1}{X_1\sum_{i=1}^{n-1} t_i\exp(\alpha_0 T_i)} \left[\sum_{i=1}^{n-1}  \frac{t_i}{\exp(\alpha t_i)-1}(X_{i+1}-X_i) - \sum_{i=1}^{n-1} t_i X_i\right] = 0,\\
&g^*(\alpha) = \frac{1}{X_1\sum_{i=1}^n t_i\exp(\alpha_0 T_i)} \sum_{i=1}^n t_i\exp(\alpha_0 T_{i}) \frac{\exp(\alpha_0 t_i)-1}{\exp(\alpha t_i)-1} -1
\end{align}

Our target is to show that as $X_{\text{min}}=\min(X_1,\dots,X_n)\rightarrow \infty$, we have $g\rightarrow g^*$.

To prove the convergence of $g(\alpha)$, we decomposite $g(\alpha)$ to two terms:

\begin{align}
\label{eq:the first term}
g(\alpha) &=\frac{1}{X_1\sum_{i=1}^{n-1} t_i\exp(\alpha_0 T_i)} \sum_{i=1}^{n-1}  \frac{t_i}{\exp(\alpha t_i)-1}(X_{i+1}-X_i) - \\
\label{eq:the second term}
&\frac{1}{X_1\sum_{i=1}^{n-1} t_i\exp(\alpha_0 T_i)}\sum_{i=1}^{n-1} t_i X_i.
\end{align}

Before proving the convergence of the two terms, we first show a useful result:

\begin{theorem}
    \label{theorem:convergence of Xn}
    \begin{equation}
    \lim_{X_{1}\rightarrow\infty}\frac{X_n}{X_1\exp(\alpha_0 T_n)} = 1  
    \end{equation}
\end{theorem}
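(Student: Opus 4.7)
The plan is to exploit the branching (additivity) property of the linear birth-death process: conditional on $X_1$ individuals present at time $T_1$, the population at the later time $T_n$ is the sum of $X_1$ independent single-ancestor LBDP copies run for total duration $T_n$ (taking $T_1=0$). This decomposition reduces the statement to a weak-law-of-large-numbers argument for an i.i.d. sum whose length grows with $X_1$.

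First I would write $X_n \mid X_1 = \sum_{k=1}^{X_1} Y_k$, where the $Y_k$ are i.i.d.\ copies of a single-ancestor LBDP evaluated at time $T_n$. From standard LBDP moment formulas one has $\mathbb{E}[Y_k] = \exp(\alpha_0 T_n)$ and $\mathrm{Var}(Y_k) = \frac{\lambda+\mu}{\lambda-\mu} \exp(\alpha_0 T_n)\bigl[\exp(\alpha_0 T_n)-1\bigr]$, both finite. Setting $Z := X_n / \bigl[X_1\exp(\alpha_0 T_n)\bigr]$ gives $\mathbb{E}[Z \mid X_1] = 1$ and $\mathrm{Var}(Z \mid X_1) = \mathrm{Var}(Y_1)/\bigl[X_1 \exp(2\alpha_0 T_n)\bigr]$.

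Then Chebyshev's inequality yields, for every $\epsilon>0$, a bound of the form $\mathbb{P}(|Z-1|>\epsilon \mid X_1) \le C(\alpha_0,T_n)/(\epsilon^2 X_1)$, which tends to $0$ as $X_1\to\infty$. This establishes the claimed convergence in probability; the stronger almost-sure statement also follows from Kolmogorov's SLLN applied to $\{Y_k\}$, since the $Y_k$ are i.i.d.\ with finite mean.

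The main obstacle is really just pinning down the mode of convergence, since the theorem is stated with a bare $\lim$; once interpreted as convergence in probability (consistent with how it is used inside the proof of Theorem~\ref{theorem:convergence of estimator equation}), the branching decomposition plus Chebyshev closes the argument with no delicate estimate. A minor care point is the case $\alpha_0 = 0$, where the variance formula must be taken as the limit $(\lambda+\mu)T_n$; the same LLN argument goes through unchanged.
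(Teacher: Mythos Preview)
Your argument is correct. The paper's own proof is a one-line remark that leans on the Gaussian approximation in force throughout: since $X_n \mid X_1$ is taken to be normal with mean $X_1\exp(\alpha_0 T_n)$ and variance of order $X_1$, the normalized ratio has mean $1$ and variance $O(1/X_1)$, and an appeal to Slutsky or ``the SLLN'' finishes. Your route via the branching decomposition $X_n=\sum_{k=1}^{X_1}Y_k$ followed by Chebyshev (or Kolmogorov's SLLN) is the same variance-to-zero mechanism, but carried out without invoking the Gaussian surrogate: it works for the exact LBDP. That buys you a slightly stronger statement---the limit is a fact about the process itself, not about its normal approximation---and it also supplies the i.i.d.\ structure that the paper's bare SLLN reference implicitly requires. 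Your care about the mode of convergence and the $\alpha_0=0$ edge case is more than the paper provides; nothing is missing.
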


\begin{proof}
    This can be easily shown using Slutsky's theorem or Strong Law of Large Numbers as $X_n$ is Gaussian distributed with mean $\mu_{X_n}=X_1\exp(\alpha_0 T_n)$ and variance $\sigma^2_{X_n}\rightarrow 0$.
\end{proof}

Then another little theorem:

\begin{theorem}
\label{theorem:convergence of sum of Xn}

For a series of random variable $Y_{i}(x)$ and functions $a_i(x)$ all continuous wrt $x$, if $\lim_{x\rightarrow\infty} \frac{Y_{i}(x)}{a_{i}(x)} = 1$ and $a_{i}(x)>0$ for $i=1,\dots,n$, then 

\begin{equation}
\lim_{x\rightarrow\infty} \frac{\sum_{i=1}^n Y_{i}(x)}{\sum_{i=1}^n a_i(x)} = 1 
\end{equation}

for all $n>0$.
\end{theorem}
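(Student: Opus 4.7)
The plan is to reduce the claim to the standard fact that a convex combination of quantities, each converging to $1$, converges to $1$ as well. Since $a_i(x) > 0$ for every $i$, the denominator $S(x) = \sum_{i=1}^n a_i(x)$ is strictly positive, so we may define weights $w_i(x) = a_i(x)/S(x)$ which are nonnegative and satisfy $\sum_{i=1}^n w_i(x) = 1$. The key algebraic identity is
\begin{equation}
\frac{\sum_{i=1}^n Y_i(x)}{\sum_{i=1}^n a_i(x)} \;=\; \sum_{i=1}^n w_i(x)\,\frac{Y_i(x)}{a_i(x)},
\end{equation}
which rewrites the target ratio as a weighted average of the individual ratios $Y_i(x)/a_i(x)$.

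From here I would proceed by subtracting $1$ from both sides. Using $\sum_i w_i(x) = 1$, the difference becomes
\begin{equation}
\frac{\sum_{i=1}^n Y_i(x)}{\sum_{i=1}^n a_i(x)} - 1 \;=\; \sum_{i=1}^n w_i(x)\left[\frac{Y_i(x)}{a_i(x)} - 1\right].
\end{equation}
Since $0 \le w_i(x) \le 1$, we obtain the uniform bound
\begin{equation}
\left|\frac{\sum_{i=1}^n Y_i(x)}{\sum_{i=1}^n a_i(x)} - 1\right| \;\le\; \sum_{i=1}^n \left|\frac{Y_i(x)}{a_i(x)} - 1\right|.
\end{equation}
By hypothesis each summand on the right tends to $0$ (in probability, in the mode inherited from Theorem \ref{theorem:convergence of Xn}), and since $n$ is fixed and finite, the sum of $n$ such terms also tends to $0$ by the continuous mapping theorem (or an elementary union bound: for any $\varepsilon > 0$, the event that the sum exceeds $\varepsilon$ is contained in the union of the events that some individual term exceeds $\varepsilon/n$).

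There is essentially no obstacle here, since the result is just the statement that a finite weighted average is continuous. The only mildly delicate point is matching the mode of convergence: Theorem \ref{theorem:convergence of Xn} gives convergence in probability, so the conclusion here should also be stated in probability, and I would make that explicit to avoid any ambiguity with the phrase \emph{limit} used in the theorem. Once the weights-sum-to-one identity is written down, the rest is a one-line bound.
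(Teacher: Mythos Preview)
Your argument is correct and rests on the same underlying observation as the paper's: since $a_i(x)>0$, the target ratio is a convex combination $\sum_i w_i(x)\,[Y_i(x)/a_i(x)]$ of the individual ratios. The paper packages this as the mediant inequality and squeezes the ratio between $\min_i Y_i/a_i$ and $\max_i Y_i/a_i$, both tending to $1$; you instead subtract $1$ and bound the deviation by $\sum_i |Y_i/a_i - 1|$ via $w_i\le 1$ and the triangle inequality. These are equivalent one-line arguments. Your version is marginally cleaner in that it never needs the preliminary remark $P(Y_i(x)>0)\to 1$ that the paper records (presumably to justify the mediant inequality, though that inequality already holds for arbitrary numerators once the denominators are positive), and your explicit comment about the mode of convergence being in probability is a useful clarification the paper leaves implicit.
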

\begin{proof}
By $\lim_{x\rightarrow\infty} \frac{Y_{i}(x)}{a_{i}(x)} = 1$, we have 

\begin{equation}
\lim_{x\rightarrow\infty}P(Y_i(x)>0)=1
\end{equation}

for all $i=1,\dots,n$.

By mediant inequality

\begin{equation}
\min\left(\frac{Y_1(x)}{a_1(x)},\dots\frac{Y_n(x)}{a_n(x)}\right)\leq \frac{\sum_{i=1}^n Y_i(x)}{\sum_{i=1}^n a_i(x)} \leq \max\left(\frac{Y_1(x)}{a_1(x)},\dots,\frac{Y_n(x)}{a_n(x)} \right)
\end{equation}

we can now take the limit

\begin{equation}
1=\lim_{x\rightarrow\infty}\min\left(\frac{Y_1(x)}{a_1(x)},\dots\frac{Y_n(x)}{a_n(x)}\right)\leq \frac{\sum_{i=1}^n Y_i(x)}{\sum_{i=1}^n a_i(x)} \leq \lim_{x\rightarrow\infty}\max\left(\frac{Y_1(x)}{a_1(x)},\dots,\frac{Y_n(x)}{a_n(x)} \right)=1
\end{equation}

Therefore,

\begin{equation}
\lim_{x\rightarrow\infty} \frac{\sum_{i=1}^n Y_i(x)}{\sum_{i=1}^n a_i(x)}=1
\end{equation}

\end{proof}

\subsubsection{Convergence of the second term}

We can use this theorem to prove the consistency of RHS:
let $T_i = \sum_{k=1}^{i-1} t_k$, by Theorem \ref{theorem:convergence of Xn}, we have $\frac{X_n}{\exp(\alpha_0 T_n)} \rightarrow 1 $ as $X_{\text{min}}\rightarrow \infty$, so 

\begin{align}
\eqref{eq:the second term} = \frac{1}{X_1\sum_{i=1}^{n-1} t_i\exp(\alpha_0 T_i)}\sum_{i=1}^{n-1} t_i X_i \rightarrow 1 \quad \text{a.s.}
\end{align}

\subsubsection{Convergence of the first term}

When doing estimation, we can only adjust the $\alpha$ in the denominator (true $\alpha$ is fixed and unknown), we use $\hat{\alpha}$ to imply that

\begin{align}
\begin{split}
\eqref{eq:the first term} &= \frac{1}{X_1\sum_{i=1}^{n-1} t_i\exp(\alpha_0 T_i)} \sum_{i=1}^{n-1}  \frac{t_i}{\exp(\alpha t_i)-1}(X_{i+1}-X_i) \\
& = \frac{1}{X_1\sum_{i=1}^{n-1} t_i\exp(\alpha_0 T_i)}  \sum_{i=1}^{n-1}  \frac{t_i}{\exp(\alpha t_i)-1}X_{i+1} - \frac{1}{\sum_{i=1}^{n-1} t_i\exp(\alpha_0 T_i)} \sum_{i=1}^{n-1}  \frac{t_i}{\exp(\alpha t_i)-1}X_i \\
& = \frac{1}{\sum_{i=1}^{n-1} t_i\exp(\alpha_0 T_i)}  \sum_{i=1}^{n-1}  \left[\frac{t_i\exp(\alpha_0 T_{i+1})}{\exp(\alpha t_i)-1}\frac{X_{i+1}}{X_1\exp(\alpha_0 T_{i+1})} - \frac{t_i\exp(\alpha_0 T_{i})}{\exp(\alpha t_i)-1}\frac{X_i}{X_1\exp(\alpha_0 T_{i})}\right] \\
& \rightarrow \frac{1}{\sum_{i=1}^{n-1} t_i\exp(\alpha_0 T_i)} \left(\sum_{i=1}^{n-1} \frac{t_i\exp(\alpha_0 T_{i+1})}{\exp(\alpha t_i)-1}-\frac{t_i\exp(\alpha_0 T_{i})}{\exp(\alpha t_i)-1}\right) \\
& = \frac{1}{\sum_{i=1}^{n-1} t_i\exp(\alpha_0 T_i)} \sum_{i=1}^{n-1} t_i\exp(\alpha_0 T_{i}) \frac{\exp(\alpha_0 t_i)-1}{\exp(\alpha t_i)-1}
\end{split}
\end{align}

\subsubsection{Convergence of \texorpdfstring{$g(\alpha)$}{g(alpha)}}
Finally, $g(\alpha)$ converges: 

$$
\lim_{X_{\text{min}}\rightarrow \infty}g(\alpha) = \frac{1}{\sum_{i=1}^n t_i\exp(\alpha_0 T_i)} \sum_{i=1}^n t_i\exp(\alpha_0 T_{i}) \frac{\exp(\alpha_0 t_i)-1}{\exp(\alpha t_i)-1} -1 = g^*(\alpha)
$$

and $g^*(\alpha)=0$ is equivalent to:

$$
\sum_{i=1}^n \frac{\exp(\alpha_0 t_i)-1}{\exp(\alpha t_i)-1} \cdot t_i\exp(\alpha_0 T_{i}) = \sum_{i=1}^n t_i\exp(\alpha_0 T_i)
$$

Because $t_i$ are greater than 0, we can see that the left hand side is monotonously decreasing wrt to $\alpha$, there is only one real root for the equation and it is $\alpha = \alpha_0$.

\subsection{Asymptotic Behavior of the Pseudo-Log-Likelihood}

We now define a pseudo-log-likelihood function $\tilde{l}$ derived from the approximate estimating equation and show its convergence to the true log-likelihood.

\begin{theorem}
\label{theorem:Asymptotic Convergence of pseudo-log-likelihood function}
Let $\tilde{l}(\alpha,\sigma^2)$ be the pseudo-log-likelihood defined by
\begin{align}
\label{eq:pseudo-log-likelihood}
    l(\alpha,\sigma^2) &= \tilde{l}(\alpha_0,\sigma^2) + \frac{1}{\sigma^2}
    \int_{\alpha_0}^\alpha \sum_{i=1}^{n-1} \frac{t_i}{\exp(a t_i) - 1} 
    \left[ X_{i+1} - X_i \exp(a t_i) \right] \, da \\
    &= l(\alpha_0,\sigma^2) + \frac{1}{\sigma^2}\sum_{i=1}^{n-1} (X_{i+1}-X_i)\left\vert\log\frac{\exp(\alpha_0 t_i)-1}{\exp(\alpha t_i)-1}\right\vert - X_{i+1} \vert \alpha_0-\alpha\vert t_i,
\end{align}

Then, as $X_{\min} \rightarrow \infty$, the relative error between $\tilde{l}(\alpha,\sigma^2)$ and the true log-likelihood $l(\alpha,\sigma^2)$ satisfies:
\begin{equation}
\label{eq:limit of tilde(l)}
    \left| \frac{\tilde{l}(\alpha,\sigma^2) - l(\alpha,\sigma^2)}{l(\alpha,\sigma^2)} \right| \in \mathcal{O}\left(\frac{1}{\log X_{\min}}\right),
\end{equation}
and hence
\begin{equation}
\label{eq:limit of tilde(l) 2}
    \frac{\tilde{l}(\alpha,\sigma^2)}{l(\alpha,\sigma^2)} \xrightarrow{P} 1.
\end{equation}
\end{theorem}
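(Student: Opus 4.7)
The plan is to start from a clean exact identity and then invoke the dominating-term result. By construction the pseudo-log-likelihood satisfies $\partial \tilde l/\partial\alpha = l_1$ and $\tilde l(\alpha_0,\sigma^2)=l(\alpha_0,\sigma^2)$, while the decomposition \eqref{eq:likelihood decomposition 1} gives $\partial l/\partial\alpha = l_1+l_2+l_3$. Applying the fundamental theorem of calculus to both $\tilde l$ and $l$ between $\alpha_0$ and $\alpha$, and subtracting, yields the exact identity
\begin{equation*}
\tilde l(\alpha,\sigma^2)-l(\alpha,\sigma^2)= -\int_{\alpha_0}^{\alpha}\bigl[l_2(a)+l_3(a)\bigr]\,da,
\end{equation*}
so the whole pseudo-likelihood error is the integral of exactly the two terms already shown to be negligible relative to $l_1$ in Theorem~\ref{theorem:Dominating term}.

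The heart of the argument is then a size estimate for this integral. From the variances of the auxiliary Gaussian variables in \eqref{eq:Zi} and \eqref{eq:tilde{Zi}}, the proof of Theorem~\ref{theorem:Dominating term} actually supplies a rate of order $O_p(X_i^{-1/2+\epsilon})$ for $|l_2/l_1|$ and $|l_3/l_1|$. Because the $a$-dependence enters only through smooth bounded functions of $\exp(at_i)$ with derivatives controlled on any compact interval bounded away from the zero set of $\exp(at_i)-1$, this rate can be upgraded to be uniform in the dummy variable $a$, which gives the integral comparison
\begin{equation*}
\int_{\alpha_0}^{\alpha}\bigl[l_2(a)+l_3(a)\bigr]\,da = o_p\!\left(\int_{\alpha_0}^{\alpha} l_1(a)\,da\right),
\end{equation*}
and in particular $|\tilde l - l|$ is bounded in probability by a quantity that does not grow with $X_{\min}$ once $\alpha$ is confined to a fixed neighbourhood of $\alpha_0$.

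To convert this into the stated relative rate $\mathcal{O}(1/\log X_{\min})$, I would next expose the leading order of the denominator $l(\alpha,\sigma^2)$. The Gaussian normalising constants contribute $-\frac{1}{2}\sum_i\log\bigl(2\pi\sigma^2 X_i\exp(\alpha t_i)(\exp(\alpha t_i)-1)\bigr)$, whose magnitude is $\Theta(\log X_{\min})$, while the quadratic-form part evaluated near $\alpha_0$ is $O_p(1)$ (a sum of chi-square type contributions, using the same moment bookkeeping as in the proof of Theorem~\ref{theorem:Dominating term}). Dividing the bounded numerator by the logarithmically diverging denominator gives \eqref{eq:limit of tilde(l)}, and the convergence in probability \eqref{eq:limit of tilde(l) 2} follows by Slutsky's theorem.

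The main technical obstacle is twofold. First, Theorem~\ref{theorem:Dominating term} is stated pointwise in $a$, but near $a=\alpha_0$ the quantity $l_1(a)$ itself can be small (its mean is zero at $\alpha_0$), so the ratios $|l_2/l_1|$ and $|l_3/l_1|$ are not automatically uniformly controlled over the entire integration range; a careful treatment would split the interval into a shrinking window around $\alpha_0$, where the integrand contributions are themselves small because the window length shrinks, and its complement, where the uniform version of Theorem~\ref{theorem:Dominating term} applies. Second, the identification $|l|=\Theta(\log X_{\min})$ is genuinely a local statement: for $\alpha$ far from $\alpha_0$ the quadratic form dominates and grows like $X_{\min}$, so the theorem is best read as describing a shrinking neighbourhood of the truth, which is in any case the regime relevant for the approximate MLE.
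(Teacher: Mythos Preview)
Your skeleton is the same as the paper's: write $\tilde l-l=-\int_{\alpha_0}^{\alpha}(l_2+l_3)\,da$, bound the numerator, and then show $|l|=\Theta(\log X_{\min})$ via the normalising constant plus an $O_p(1)$ chi-square contribution. The one place the paper is more direct is your steps 2--4: instead of controlling the ratios $|l_2/l_1|,|l_3/l_1|$ uniformly in $a$ and then comparing integrals, the paper multiplies the ratio bound $O_p(X_i^{-1/2})$ from Theorem~\ref{theorem:Dominating term} by the size estimate $l_1=O_p(X_i^{1/2})$ to get $|l_2+l_3|=O_p(1)$ in absolute terms, and then integrates that directly to obtain $\tilde l-l\in\mathcal{O}(1)$. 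This sidesteps your first ``technical obstacle'' entirely, since an absolute bound on $l_2+l_3$ does not care whether $l_1(a)$ happens to be small near $a=\alpha_0$. Your second caveat---that at $\alpha$ far from $\alpha_0$ the quadratic form is $\Theta(X_{\min})$ rather than $O_p(1)$, so the statement is really local---is a valid observation that the paper does not make explicit; the paper's size estimates for $l_1$ and $l_2$ implicitly use the model distribution at the same $\alpha$ appearing in the expressions, so its argument is likewise tacitly local.
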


This result confirms that $\tilde{l}(\alpha,\sigma^2)$ is a reliable approximation of the true log-likelihood in the asymptotic regime.

\subsubsection{Proof for Theorem \ref{theorem:Asymptotic Convergence of pseudo-log-likelihood function}}

From \ref{subsec:dominating term}, we know that $l_1$ in \ref{subsec:dominating term} dominates $\frac{\partial l}{\partial\alpha}$ as $X_i\rightarrow\infty$ with $\frac{\frac{\partial l}{\partial \alpha}}{l_1}\xrightarrow{p} 1 $. More specifically, by \ref{eq:l2/l1} and \ref{eq:l3/l1}, we have

\begin{align}
    \frac{\frac{\partial l}{\partial \alpha}}{l_1} &= 1+ \frac{l_2}{l_1} + \frac{l_3}{l_1} \\
    & =1+  \frac{2\exp(\alpha t_i)-1}{2\exp(\alpha t_i)-2}\cdot \frac{X_{i+1} - X_i\exp(\alpha t_i)}{X_i\exp(\alpha t_i)}+\frac{\sigma^2\Bigl[2\exp(\alpha t_i)-1\Bigr]}{X_{i+1} - X_i\exp(\alpha t_i)} \\
    & = 1+ X_i^{-\frac{1}{2}}\cdot \frac{Z_i}{X_i^{-\frac{1}{2}}} + X_i^{-\frac{1}{2}}\cdot \frac{\tilde{Z_i}^{-1}}{X_i^{-\frac{1}{2}}} \\
    & = 1+ X_i^{-\frac{1}{2}}\cdot \left(\frac{Z_i}{X_i^{-\frac{1}{2}}} + \frac{\tilde{Z_i}^{-1}}{X_i^{-\frac{1}{2}}}\right)
\end{align}

by \ref{eq:Zi} and \ref{eq:tilde{Zi}}, we know that $\left(\frac{Z_i}{X_i^{-\frac{1}{2}}} + \frac{\tilde{Z_i}^{-1}}{X_i^{-\frac{1}{2}}}\right)$ is a random variable independent of $X_i$, therefore,

\begin{equation}
\label{eq:dl/da / l}
\frac{\frac{\partial l}{\partial \alpha}-l_1}{l_1} \in \mathcal{O}(X_i^{-\frac{1}{2}})
\end{equation}

and equivalently,

\begin{equation}
\label{eq:dl/da / l _ 2}
\frac{\frac{\partial l}{\partial \alpha}-l_1}{\frac{\partial l}{\partial \alpha}} \in \mathcal{O}(X_i^{-\frac{1}{2}})
\end{equation}

\subsubsection{Construction of pseudo-log-likelihood}
Suppose $X_{\text{min}}=\min(X_1,\dots,X_n)$. Define the pseudo-log-likelihood based on the left-hand side function in \ref{eq:approximate estimator equation}, and $\alpha_0$ is the true parameter in \ref{eq:Gaussian Approximation Likelihood}: 

\begin{align}
    \tilde{l}(\alpha,\sigma^2) &= l(\alpha_0,\sigma^2) + \frac{1}{\sigma^2}
    \int_{\alpha_0}^\alpha \sum_{i=1}^{n-1} \frac{t_i}{\exp(a t_i) - 1} 
    \left[ X_{i+1} - X_i \exp(a t_i) \right] \, da \\
     &= l(\alpha_0,\sigma^2) + \frac{1}{\sigma^2}\sum_{i=1}^{n-1} (X_{i+1}-X_i)\left\vert\log\frac{\exp(\alpha_0 t_i)-1}{\exp(\alpha t_i)-1}\right\vert - X_{i+1} \vert \alpha_0-\alpha\vert t_i,
\end{align}  
where $\tilde{l}(\alpha_0,\sigma^2) = l(\alpha_0,\sigma^2)$. When $X_{\text{min}}\rightarrow \infty$, by \eqref{eq:dl/da / l}, we have

\begin{equation}
\vert l_{1}-\frac{\partial l}{\partial \alpha}\vert  \in \mathcal{O}(X_i^{-\frac{1}{2}}l_{1})
\end{equation}

By the definition of $l_1$:

\begin{equation}
l_1 = \frac{t_i\Bigl[X_{i+1} - X_i\exp(\alpha t_i)\Bigr]}{\sigma^2\Bigl[\exp(\alpha t_i)-1\Bigr]} = \sqrt{\frac{X_i \exp(\alpha t_i)}{\sigma^2 \Bigl[\exp(\alpha t_i)-1\Bigr]}}\cdot Z_0
\end{equation}

where $Z_0\sim N(0,1)$ is a standard normal variable. This implies that

\begin{equation}
\label{eq:scale of l1}
l_1\in \mathcal{O}(X_{i}^{\frac{1}{2}})
\end{equation}

Thus,

\begin{equation}
\label{eq:scale of l1-dl/da}
\vert l_{1}-\frac{\partial l}{\partial \alpha}\vert  \in \mathcal{O}(1)
\end{equation}

Note that both $l_1$ and $l$ here only stands for $X_{i+1}\vert X_i$. From now we will use $l_{1,X_{i+1}\vert X_i}$
to denote $l_1$ in \eqref{eq:dl/da / l} for $X_{i+1}\vert X_i$ and $\frac{\partial l_{X_{i+1}\vert X_i}}{\partial \alpha}$ to denote $\frac{\partial l}{\partial \alpha}$ for $X_{i+1}\vert X_i$ in \eqref{eq:dl/da / l}. The new $l$ is the joint log-likelihood for $X_1,\dots,X_n$.

By summing up from $X_2\vert X_1$ to $X_n\vert X_{n-1}$, we have 

\begin{align}
\sum_{i=1}^{n-1}\vert l_{1,X_{i+1}\vert X_i}-\frac{\partial l_{X_{i+1}\vert X_i}}{\partial \alpha}\vert  \in \mathcal{O}(1)
\end{align}

by 

\begin{equation}
\sum_{i=1}^{n-1}\vert l_{1,X_{i+1}\vert X_i}-\frac{\partial l_{X_{i+1}\vert X_i}}{\partial \alpha}\vert\geq \vert \sum_{i=1}^{n-1}l_{1,X_{i+1}\vert X_i}-\sum_{i=1}^{n-1}\frac{\partial l_{X_{i+1}\vert X_i}}{\partial \alpha}\vert
\end{equation}

Note that by definition,

\begin{align}
&\sum_{i=1}^{n-1} l_{1,X_{i+1}\vert X_i}=\frac{\partial \tilde{l}}{\partial \alpha} \\ 
&\sum_{i=1}^{n-1}\frac{\partial l_{X_{i+1}\vert X_i}}{\partial \alpha}=\frac{\partial l}{\partial \alpha} 
\end{align}

where $\tilde{l}$ is the constructed pseudo-log-likelihood and $l$ is the true log-likelihood. Therefore, we have

\begin{equation}
\vert \frac{\partial \tilde{l}}{\partial \alpha}- \frac{\partial l}{\partial \alpha}\vert \in \mathcal{O}(1)\\
\end{equation}

As $\tilde{l}(\alpha_{0})=l(\alpha_{0})$ by definition, we have

\begin{equation}
\label{eq:scale of tilde(l)-l}
\tilde{l}(\alpha) - l(\alpha) = \int_{\alpha}^{\alpha_{0}}\frac{\partial \tilde{l}(a)}{\partial \alpha}- \frac{\partial l(a)}{\partial \alpha} da \in \mathcal{O}(1)
\end{equation}

wrt $X_{\text{min}}$. Revisit the definition of \eqref{eq:Gaussian Approximation Likelihood}, we have the log likelihood $l_{X_{i+1}\vert X_i}$ as a function of $X_i$: 

\begin{align}
l_{X_{i+1}\vert X_{i}} &\propto -\frac{1}{2}\log(X_i)-\frac{[X_{i+1} - X_i\exp(\alpha t_i)]^2 }{2X_i\exp(\alpha t_i)[\exp(\alpha t_i)-1]\sigma^2} \\
&\propto -\frac{1}{2}\log(X_i)- \frac{1}{2}\tilde{\chi}_1^2 \in \mathcal{O}(\log(X_i))
\end{align}

where $\tilde{\chi}_1^2$ is a chi-square random variable with 1 degree of freedom.

This means that for the joint likelihood:

\begin{align}
\label{eq:scale of l}
l\in \mathcal{O}(\sum_{i=1}^{n-1} \log(X_i))= O\left(\log(X_{\text{min}})\right)
\end{align}

Finally, by \eqref{eq:scale of tilde(l)-l}  and \eqref{eq:scale of l}, we have

\begin{equation}
\label{eq:limit of tilde(l)-Appendix}
    \left\vert \frac{\tilde{l}-l}{l} \right\vert  \in \mathcal{O}(\frac{1}{\log(X_{\text{min}})})
\end{equation}

Therefore, as $X_{\text{min}}\rightarrow \infty$, we have 
\begin{equation}
 \frac{\tilde{l}}{l} \xrightarrow{P} 1.
\end{equation}

\subsection{Proof for Theorem \ref{theorem:Convergence of g(theta) generalized}}

For generalized birth-death process with true parameter $\theta_0$, Theorem \ref{theorem:convergence of Xn} still holds, although we should rewrite it as

\begin{equation}
\lim_{X_{1}\rightarrow\infty}\frac{X_n}{E(X_n)} = 1  
\end{equation}

where by definition $E(X_n)=X_1 \prod_{j=1}^{i-1}\mu_j(\theta_0)$. 

Equivalently, let $X_{\text{min}}= \min(X_1,\dots,X_n)$,

\begin{equation}
\lim_{X_{\text{min}}\rightarrow\infty}\frac{X_n}{X_1} = \prod_{j=1}^{i-1}\mu_j(\theta_0). 
\end{equation}

We can decompose $g(\theta)$ in \eqref{eq:generalized g} as 

\begin{align}
\begin{split}
g(\theta) &=\frac{1}{X_1\sum_{i=1}^{n-1}\left[ \frac{\partial \mu_i(\theta_0)}{\partial \theta} \frac{\mu_i(\theta_0)}{\sigma_i^2(\theta_0)}\prod_{j=1}^{i-1}\mu_j(\theta_0)\right]} \left(\sum_{i=1}^{n-1} \frac{\frac{\partial \mu_i(\theta)}{\partial \theta} }{\sigma^2_i(\theta)} X_{i+1}-\sum_{i=1}^{n-1} \frac{\frac{\partial \mu_i(\theta)}{\partial \theta} \mu_i(\theta)}{\sigma^2_i(\theta)}X_i\right) \\
& = \frac{\sum_{i=1}^{n-1} \frac{\frac{\partial \mu_i(\theta)}{\partial \theta} }{\sigma^2_i(\theta)} X_{i+1}/X_1}{\sum_{i=1}^{n-1}\left[  \frac{\frac{\partial \mu_i(\theta_0)}{\partial \theta} }{\sigma_i^2(\theta_0)}\cdot \prod_{j=1}^{i}\mu_j(\theta_0)\right]} - \frac{\sum_{i=1}^{n-1} \frac{\frac{\partial \mu_i(\theta)}{\partial \theta}\mu_i(\theta)}{\sigma^2_i(\theta)} X_{i}/X_1}{\sum_{i=1}^{n-1}\left[ \frac{\partial \mu_i(\theta_0)}{\partial \theta} \frac{\mu_i(\theta_0)}{\sigma_i^2(\theta_0)}\cdot \prod_{j=1}^{i-1}\mu_j(\theta_0)\right]} 
\end{split}
\end{align}

where

\begin{align}
\begin{split}
    \mu_i(\theta) &=  e^{\int_{T_i}^{T_{i+1}} \lambda(s;\theta)-\mu(s;\theta) ds}, \\
    \sigma_i^2(\theta) &= \int_{T_i}^{T_{i+1}} \left[\lambda(u;\theta) + \mu(u;\theta)\right] \exp\left(-\int_{T_i}^{u} \left[\lambda(s;\theta) - \mu(s;\theta)\right] \, ds\right) du.
\end{split}
\end{align}

As $\lim_{X_{\text{min}}\rightarrow\infty}\frac{X_{i+1}}{X_1 } = \prod_{j=1}^{i}\mu_j(\theta_0) $ and $\lim_{X_{\text{min}}\rightarrow\infty}\frac{X_i}{X_1 } = \prod_{j=1}^{i-1}\mu_j(\theta_0) $, by Theorem \ref{theorem:convergence of sum of Xn}, we have

\begin{align}
\begin{split}
\lim_{X_{\min}\rightarrow\infty} g(\theta) &= \frac{\sum_{i=1}^{n-1} \frac{\frac{\partial \mu_i(\theta)}{\partial \theta}}{\sigma^2_i(\theta)} \prod_{j=1}^{i}\mu_j(\theta_0)}{\sum_{i=1}^{n-1}\left[  \frac{\frac{\partial \mu_i(\theta_0)}{\partial \theta}}{\sigma_i^2(\theta_0)}\cdot \prod_{j=1}^{i}\mu_j(\theta_0)\right]} - \frac{\sum_{i=1}^{n-1} \frac{\frac{\partial \mu_i(\theta)}{\partial \theta}\mu_i(\theta)}{\sigma^2_i(\theta)}\prod_{j=1}^{i-1}\mu_j(\theta_0)}{\sum_{i=1}^{n-1}\left[ \frac{\partial \mu_i(\theta_0)}{\partial \theta} \frac{\mu_i(\theta_0)}{\sigma_i^2(\theta_0)}\cdot \prod_{j=1}^{i-1}\mu_j(\theta_0)\right]} \\
& = \frac{1}{\sum_{i=1}^{n-1}\left[  \frac{\frac{\partial \mu_i(\theta_0)}{\partial \theta}}{\sigma_i^2(\theta_0)}\cdot \prod_{j=1}^{i}\mu_j(\theta_0)\right]}\sum_{i=1}^{n-1} \frac{\frac{\partial \mu_i(\theta)}{\partial \theta}}{\sigma^2_i(\theta)} \cdot \left[\mu_i(\theta_0)-\mu_i(\theta)\right]\cdot\prod_{j=1}^{i-1}\mu_j(\theta_0) 
\end{split}
\end{align}

When $\theta=\theta_0$, we have $\mu_i(\theta_0)-\mu_i(\theta)=0$, which will make $\lim_{X_{\min}\rightarrow\infty} g(\theta)=0$. By this, we have shown that the true parameter $\theta_0$ is one of the solutions of $\lim_{X_{\min}\rightarrow\infty} g(\theta)=0$.

\end{document}